\newcommand{\mmod}{\!\!\mod}
\newcommand{\inv}[1]{{#1}^{-1}}
\newcommand{\pcinput}{\textbf{input: }}
\newcommand{\pcoutput}{\textbf{output: }}
\renewcommand{\vector}[1]{\sequence{#1}}
\newcommand{\etal}{\textit{et al.\@}}
\tikzstyle{none}=[inner sep=0mm]
\title{Delay Encryption by Cubing}
\author{Ivo Maffei \and A. W. Roscoe}
\institute{Department of Computer Science, University of Oxford}
\begin{document}

\maketitle

\begin{abstract}
Delay Encryption (often called Timed-Release Encryption) is a scheme in which a message is sent into the future by ensuring its confidentiality only for a given amount of time.
We propose a new scheme based on a novel time-lock puzzle.
This puzzle relies on the assumption that repeated squaring is an inherently sequential process.
We perform an extensive and practical analysis of many classical and quantum attacks on our scheme and conclude that it is secure given some precautions.
\keywords{Delay encryption \and Timed-release encryption \and Time-lock puzzle}
\end{abstract}

\section{Introduction}
Timed-Release Encryption (TRE) is a method by which one can encrypt a message only for some amount of time.
This relatively little studied encryption model has a wide range of applications spanning from online gambling to e-voting systems and internet protocols.
In 1996, Rivest \etal \cite{Rivest} proposed a few ways in which the problem of sending messages into the future could be solved.
Since then, the literature has focused on only two possible approaches:
\begin{enumerate}
\item Time-lock puzzles: computational puzzles that are guaranteed to require some given amount sequential of work.
\item Trusted third parties: interactive protocols that take advantage of the presence of a TTP.
\end{enumerate}
The literature on examples of the second approach is abundant and includes ones based on synthetic TTPs such as blockchains \cite{Lai}. 
On the other hand, the time-lock puzzle proposed by Rivest \textit{et al.}\ seems to be the only accepted example of the first category.
The main disadvantage of this method  is that it relies on the hardness of integer factorisation: an unreasonable assumption in a post-quantum world.
The recent interest that time-related encryption schemes are getting makes the need for a secure TRE scheme urgent.

In this paper, we propose a time-lock puzzle based on the sequentiality of repeated squaring which  appears to be quantum-resistant.
This idea for TRE was discussed in passing by Roscoe \cite{Roscoe} and here we analyse his idea further in order to create a concrete and secure scheme. 
Our primary aim is to achieve secure delays of  a few seconds or minutes that would allow this scheme to be adopted in internet protocols.
The disparity in computational power between a resourceful adversary and an average user makes our scheme impractical for long delays.
Therefore, we believe that the general term ``timed-release encryption'' can be misleading and we resort to the idea of ``delay encryption'' as originally intended by Roscoe \cite{Roscoe} rather than the newer definition by Burdges and De Feo \cite{DeFeo}.
Short delays  are enough to guarantee the security of protocols such as the partially-fair computation by Couteau \etal\cite{Geoffroy}, where delay encryption plays a crucial role. Specifically, the protocol in \cite{Geoffroy} betters the known upper bound  for fair exchange of \cite{Gordon} for protocols where no delay encryption was permitted. Given this result, we were motivated to ensure that delay encryption as used in \cite{Geoffroy} and similar protocols is soundly based.
Long-term delays via time-lock puzzles are very impractical because they force the receiver to do a very expensive computation.
Improvements in algorithmic theory and hardware make the estimate of years-long delays very imprecise.

Although the use of the cubing operation is not novel in the context of time-related cryptography, its use for timed-release encryption is.
Many other researchers (e.g.\@ \cite{vdf,Dwork,Ephraim20a,Jerschow,pietrzak}) proposed similar constructions based on repeated squaring in research areas related to TRE. 
However, most of these works are based on the time-lock puzzle by Rivest \etal \cite{Rivest} and they all lack a key component of our research: a practical analysis of the scheme.
In particular, as we will see in Sections \ref{sec:attacks} and \ref{sec:pq},  an adversary with large enough computational power can easily break the security of our scheme as well as the scheme by Rivest \textit{et al}.
However, this turns out impractical with the current state of technology.
With this paper, we aim at starting a discussion on practical TRE schemes that will withstand attackers equipped quantum computers.

To motivate the security of our scheme, we performed an extensive and practical analysis of the sequentiality of modular exponentiation.
This part of our work is interesting in its own right since it naturally translates to the context of the RSA-based time-lock puzzle by Rivest \etal\cite{Rivest}. 
In this study, we point out that a resourceful malicious entity can obtain a considerable speedup in computing the exponentiation as the modulus grows larger.
This issue can be mitigated by chaining together shorter and safer delays. Although this is a common technique in the construction of Verifiable Delay Functions \cite{vdf}, it has not been applied to the field of TRE and time-lock puzzles.
Composing multiple delays into a longer one is not trivial because one must ensure the sequentiality of the end result.
In Section \ref{sec:chain} we will propose a few secure options. 
Note that the use of chaining does not increase the ratio between decryption time and encryption time. 
As a result, chaining is relatively inefficient, but it suits our use in cryptographic protocols such as those by Couteau \etal\cite{Geoffroy} or by Roscoe and Ryan \cite{Ryan}.

In Section \ref{sec:tre}, we outline the encryption scheme proposed and analyse what parameters should be used to make it secure.
We then proceed to analyse ways in which our scheme can be extended to longer delays via chaining in Section \ref{sec:chain}.
In Section \ref{sec:attacks}, we perform an extensive analysis of possible classical attacks on the system. 
In Section \ref{sec:pq}, we look at the additional attacks that a quantum computer could carry against our scheme.

\section{Delay by cubing}
\label{sec:tre}
In this section,s we propose a time-lock puzzle based on the empirical hypothesis that repeated squaring cannot be parallelised.
In particular, the underlying conjecture is given by the following definition.
\begin{conjecture}[Sequentiality]\label{seqAssump}
Given $x, y, p$ with $x, y \in \mathbb Z_p$ and $p$ prime, computing $x^y \mod p$ requires computing at least $\lfloor \log y \rfloor$ modular multiplications sequentially.
\end{conjecture}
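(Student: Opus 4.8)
Strictly speaking, Conjecture~\ref{seqAssump} is an \emph{assumption}, not a statement one can prove unconditionally: a genuine proof would require an unconditional parallel-time (circuit-depth) lower bound for an explicit, polynomial-time-computable function, which is well beyond the reach of current techniques. What one \emph{can} do, and what I would do to justify the quantity $\lfloor \log y \rfloor$, breaks into three parts. First, settle the matching upper bound: plain square-and-multiply evaluates $x^y \bmod p$ with $\lfloor \log y \rfloor$ squarings and at most $\lfloor \log y \rfloor$ further multiplications, all on a single chain, while windowed and sliding-window variants change only constants; so $\Theta(\log y)$ sequential modular multiplications suffice, and the conjecture is asserting that one cannot do asymptotically better. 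Second, prove the lower bound in the natural \emph{algebraic} model. Third, delimit honestly what falls outside that model --- this is precisely the gap that keeps the statement a conjecture --- and point to the concrete cryptanalysis of Sections~\ref{sec:attacks} and~\ref{sec:pq} as its substitute.

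For the second part, model a candidate algorithm as a straight-line program over $\mathbb Z_p$ whose inputs are $x$ together with arbitrary constants of $\mathbb Z_p$ (these may depend on $p$ and $y$, but not on $x$), whose gates are modular addition, subtraction and multiplication, and whose \emph{multiplicative depth} $d$ is the length of the longest directed path through multiplication gates. Regard $x$ as a formal indeterminate and track, for each wire, the degree in $x$ of the value it carries: addition and subtraction leave this degree unchanged, whereas a multiplication gate at most doubles it, so every value of multiplicative depth at most $d$ is a polynomial of degree at most $2^d$. If the program is correct it computes, as a function on $\mathbb Z_p$, the map $x \mapsto x^y$; since $y < p$, every polynomial representing this map has degree at least $y$ (one of degree below $p$ must be $x^y$ itself, one of degree at least $p$ trivially exceeds $y$). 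Therefore $2^d \ge y$, i.e.\ $d \ge \log y$, and the program performs at least $\lfloor \log y \rfloor$ multiplications on a single path, as claimed. Two points need care. The bound must be read in the worst-case (equivalently, generic) sense: for a specific $x$ of tiny order, such as $x \in \{0, 1, -1\}$, it plainly fails, and the argument instead lower-bounds any program required to be correct for all $x$. And if division is permitted as a gate, wires carry rational functions and the degree bookkeeping must be done for numerator and denominator separately; inversions do not speed up the growth of the exponent either, and the argument adapts.

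The main obstacle --- and the reason Conjecture~\ref{seqAssump} is stated as a conjecture rather than proved outright --- is everything the algebraic model omits: genuinely non-algebraic or data-dependent algorithms; bit-level parallelism across the $\Theta(\log p)$ bits of each residue; shortcuts exploiting special structure of the modulus, for instance a smooth $p-1$, which opens subgroup- and CRT-based manoeuvres; and quantum algorithms, where period finding yields the group order and discrete logarithms cheaply yet does not obviously parallelise the exponentiation itself. No uniform argument rules these out. The plan is therefore twofold: constrain the parameters so that the structural shortcuts disappear --- in particular, take $p$ with $p-1$ divisible by a large prime --- and, in place of a general lower bound, rely on the extensive attack analysis of Sections~\ref{sec:attacks} and~\ref{sec:pq}, which is where the security of the scheme is genuinely established.
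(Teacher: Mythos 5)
The statement you were asked to prove is labelled a conjecture, and the paper indeed offers no proof of it: its entire justification is empirical --- the observation that ``there seems to be an implicit lower bound of $n-1$'' sequential multiplications in all known exponentiation algorithms, plus the attack surveys of Sections~\ref{sec:attacks} and~\ref{sec:pq}. You correctly recognise this, so your proposal is not wrong; it simply does more than the paper does. The genuinely different ingredient you add is the degree-doubling lower bound in the algebraic straight-line-program model, and that argument is sound: a wire at multiplicative depth $d$ carries a polynomial in $x$ of degree at most $2^d$; any polynomial inducing the map $x \mapsto x^y$ on $\mathbb Z_p$ either is the canonical reduction $x^y$ itself (degree $y$, since $y < p$) or has degree at least $p > y$; hence $2^d \ge y$ and $d \ge \lceil \log y\rceil \ge \lfloor \log y \rfloor$. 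Your caveats about reading the bound generically (it fails for $x \in \{0, \pm 1\}$) and about division gates are the right ones. What this buys is a clean formal statement of exactly which adversaries the conjecture already defeats; what the paper's approach buys instead is coverage of the adversaries your model excludes --- and it is worth noting that the paper's own Section~\ref{sec:circ} exhibits precisely such an adversary: the Hamano \etal\ circuit of Theorem~\ref{th:circpow} computes $x^y \bmod m$ in Boolean depth well below what $\lfloor \log y\rfloor$ full sequential modular multiplications would cost, and the paper's defence against it is the infeasible circuit size and memory, not any depth lower bound. So your ``third part'' is not merely an honest disclaimer; it is where the paper's actual security argument lives, and your proposal would be strengthened by pointing at Theorem~\ref{th:circpow} as a concrete witness that the algebraic-model bound cannot be the whole story.
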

In the above conjecture, as in the rest of this article, the logarithm is taken with base 2.

\subsection{Preparation phase}
The preparation phase is equivalent to the key-generation step in a traditional scheme.
The result of this process is a tuple  of parameters $(p, b, T)$ which should be treated as a public key.
\begin{figure}[h]
\procb{Puzzle setup}{
\textbf{input: } T, \lambda\\
\textbf{output: } (p, b, T)\\
p \gets 	\text{a large safe prime with } \floor {\log p} = \lambda T\\
b \gets \frac 1 3 \left( 1 + 2 \left(p-1\right)\right)
}
\caption{Preparation phase.}
\label{fig:treSetup}
\end{figure}

Figure \ref{fig:treSetup} shows the pseudocode for the setup phase.
The input $\lambda$ represents a security parameter and $T$ indicates the least amount of time required to solve the puzzle.
In particular, $\lambda$ represents the conjectured speed (number of modular squarings per second) that a resourceful adversary could obtain.
The prime $p$ is picked to be a ``safe prime'' (i.e.\@ $\frac {p-1} 2$ is also prime, often called a Sophie Germain prime) in order to reduce possible attacks on the system.
See Section \ref{sec:safeModulus} for a deeper discussion on the reasons for this choice.
Note also that if $p$ is a safe prime, then the cubing operation is invertible in $\mathbb Z_p$.
The number $b$ is picked to be the smallest integer satisfying $x^{3b} \equiv x \mod p$ for all $x \in \ZZ_p$.

\subsection{Encryption and Decryption}\label{sec:enc}
Given a message $m \in \mathbb{Z}_p$, its encryption is $m^3 \mmod p$.
To decrypt a ciphertext $c$ is enough to compute $c^b \mmod p$.

\begin{figure}[h]
\begin{minipage}{0.45\textwidth}
\procb{Encryption}{
\textbf{input: } m, p, \lambda\\
\textbf{output: } c\\
s \sample \bin^{\lambda}\\
x \gets \textsf{pad}\left( m, s\right)\\
c \gets x^3 \mod p
}
\caption{Encryption process.}\label{fig:enc}
\end{minipage}\hfill
\begin{minipage}{0.45\textwidth}
\procb{Decryption}{
\textbf{input: } c, p, b\\
\textbf{output: } m\\
x \gets c^b \mod p\\
m \gets \textsf{pad}^{-1}\left(x\right)
}
\caption{Decryption process.}\label{fig:dec}
\end{minipage}
\end{figure}

Figure \ref{fig:enc} shows the full algorithm for the delay of a message.
In Section \ref{sec:randPadding} we will discuss in detail the purpose of $s$, in short $\lambda$ can be regarded as the key length for the encryption scheme.
In the proposed algorithm, we use a generic padding scheme \textsf{pad} which is assumed to satisfy the following conditions:
\begin{enumerate}
\item It must allow the use of  a random seed of variable length.
\item The result $x$ must satisfy $x \gg \sqrt[3] p$.
\item Removal of the padding requires no knowledge of the random seed used.
\end{enumerate}

Figure \ref{fig:dec} illustrates the decryption algorithm.
Decryption works because $3b \equiv 1 \mod (p-1)$ and for all $x \in \mathbb Z_p$ $x^{p-1} \equiv 1 \mod p$.
Therefore $x^{3b} \equiv x^{2(p-1)} x \equiv x \mod p$. 
Note that one can decrypt $x^3 \mod p$ by using another number $a$ in place of $b$ as long as  $3a\equiv 1 \mod \text{ord}(x)$.
In Section \ref{sec:safeModulus} we explain how to ensure that there is no convenience in using an exponent different from $b$.

\subsection{Security analysis}
The proposed scheme works as a delay encryption because the decryption process has time complexity $O((\log p)^2 \log \log p)$ while the encryption process only $O(\log p \log \log p)$.
However, this holds true only assuming that repeated squaring is an inherently sequential process. 
This assumption has been widely studied since the first proposal of a time-lock puzzle by Rivest \etal \cite{Rivest} and today it is still widely used.

On the practical side, we tested some timings on an Intel i7-6920HQ quad-core with Hyper-Threading enabled. 
The CPU base clock speed is 2.9GHz (up to 3.8GHz using Turbo Boost).
The software used took advantage of the open-source library GMP \cite{GMP} to compute the exponentiations.
The GMP library focuses on high performance thanks to its low-level assembly implementations of many primitives. 
Moreover, the GMP community invests a significant amount of effort in producing code that is tailored for many processor pipelines.
The C code was compiled with Apple clang version 12.0.0 and optimisation flag \texttt{-Ofast}.
Using the 70034-bit prime $p = 2566851867 \times 2^{70002} -1$ and picking random messages, we measured that, discounting padding costs, decryption takes roughly 19s while encryption requires only 0.8ms. On CPUs i9-9880H and i9-9980HK running Windows 10, we found that the decryption time is roughly 20s.
This shows that it is possible to utilise our encryption scheme to delay a message efficiently.
When making concrete computations, we will use the above prime. 
The resulting figures are only indicative, but we expect that the analysis in this paper would substantially apply to any prime in the range of 25000-250000 bits.
Our concentration on this single prime is somewhat excused by Section \ref{sec:chain}, where we will show how the delay created by a single cubing puzzle can be multiplied by a small integer factor without changing the prime. 
However, we do not claim that this choice is optimal nor that different applications should not use different primes.

\subsubsection{Using a safe modulus}
\label{sec:safeModulus}
Let $m$ be the modulus used in our scheme. In Section \ref{sec:enc} we specified that $m$ should be a safe prime.
The most important factor in the choice of $m$ is that the cubing operation must be invertible for all (or most) $x \in \mathbb Z_m$.
Moreover, we aim to construct a delay encryption, therefore decryption should be as hard as possible while still being feasible.
To show that safe primes are the optimal choice, we first formalise this requirement.

Let $\mathcal B_m = \{ (\lfloor \log b\rfloor + 1, x) \mid b \equiv 3^{-1} \mmod \mathrm{ord}(x) \land x \in \mathbb Z_m^* \land b < \mathrm{ord}(x) \}$, 
$H_m = \max \{b \mid (b,x) \in \mathcal B_m$ for some $x \}$ and 
$\mathcal H^\epsilon _m = \{  x \mid (b,x) \in \mathcal B_m \,\land \,H_m - b < \epsilon \}$.
The set $\mathcal B_m$ contains the pairs $(b,x)$ such that $b$ is the bitsize of the exponent needed to invert $x^3 \mmod m$.
The set $\mathcal H^\epsilon _m$ represents the subset of $\mathbb Z_m$ for which decryption requires at most $\epsilon$ squarings less than the maximal value $H_m$.
We aim to minimise $K^\epsilon_m = |\ZZ_m^*| - |\mathcal H^\epsilon _m|$ among all $m$ of the same bitsize.

\begin{theorem}\label{th:safe}
If $m$ is a safe prime, then for $3\leq \epsilon < H_m$ $K^\epsilon_m$ is minimal.
\end{theorem}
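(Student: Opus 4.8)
The plan is to exploit the very simple subgroup structure of $\ZZ_m^*$ when $m$ is a large safe prime, read off $K^\epsilon_m$ almost by inspection, and then argue that no modulus of the same bitsize can do better than the two unavoidable ``fast'' units $1$ and $-1$. First I would record the structure. Writing $m=2q+1$ with $q$ prime, the group $\ZZ_m^*$ is cyclic of order $2q$, so the only element orders are $1,2,q,2q$, realised by $1,1,q-1,q-1$ elements respectively. For a large safe prime one necessarily has $q\equiv 2\pmod 3$ (otherwise $q=3$, or $q\equiv 1\pmod 3$ and then $3\mid 2q+1=m$), so $\gcd(3,\mathrm{ord}(x))=1$ for every $x\in\ZZ_m^*$ — this is the ``cubing is invertible'' observation of the setup — and the inversion exponent $b_x:=(3^{-1}\bmod\mathrm{ord}(x))\in[0,\mathrm{ord}(x))$ exists for every $x$. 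A one-line computation then gives $b_1=0$, $b_{-1}=1$, $b_x=(q+1)/3$ for $\mathrm{ord}(x)=q$, and $b_x=(2(m-1)+1)/3=(4q+1)/3$ for $\mathrm{ord}(x)=2q$ — the last being exactly the public exponent $b$ of Figure~\ref{fig:treSetup}. Each $b_x$ lies below $\mathrm{ord}(x)$, so every nontrivial $x$ contributes a pair $(\lfloor\log b_x\rfloor+1,\,x)$ to $\mathcal B_m$.

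Next I would compute $H_m$ and $K^\epsilon_m$ directly. The largest $b_x$ is $(4q+1)/3$, so $H_m=\lfloor\log((4q+1)/3)\rfloor+1$. The only inequality needed is $(4q+1)/(q+1)\in(3,4)$, which forces $\lfloor\log((4q+1)/3)\rfloor-\lfloor\log((q+1)/3)\rfloor\in\{1,2\}$; hence an element of order $q$ has bitsize at least $H_m-2$. Therefore, when $\epsilon\ge 3$, every $x$ with $\mathrm{ord}(x)\in\{q,2q\}$ satisfies $H_m-(\lfloor\log b_x\rfloor+1)\le 2<\epsilon$ and so lies in $\mathcal H^\epsilon_m$, giving $2(q-1)=m-3$ elements. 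By contrast $x=1$ is never in $\mathcal H^\epsilon_m$ (its inversion exponent is $0$, so it contributes nothing to $\mathcal B_m$), and $x=-1$ has bitsize $1$, so $H_m-1\ge\epsilon$ (because $\epsilon<H_m$ and both are integers) and $x=-1\notin\mathcal H^\epsilon_m$. Thus $|\mathcal H^\epsilon_m|=m-3$ and $K^\epsilon_m=(m-1)-(m-3)=2$. Note that the two hypotheses on $\epsilon$ are each used once: $\epsilon\ge 3$ absorbs the (possibly $2$-bit) gap of the order-$q$ units, and $\epsilon<H_m$ keeps $-1$ outside $\mathcal H^\epsilon_m$.

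Finally I would establish minimality: $K^\epsilon_{m'}\ge 2$ for every modulus $m'$ of the same bitsize. Just as above, $1\in\ZZ_{m'}^*$ is always outside $\mathcal H^\epsilon_{m'}$, and $-1\in\ZZ_{m'}^*$ has bitsize $1$ and so is outside $\mathcal H^\epsilon_{m'}$ provided $H_{m'}>\epsilon$; moreover, if $3$ divides the exponent of $\ZZ_{m'}^*$ then cubing is not a bijection on $\ZZ_{m'}^*$ and a constant fraction of the units possess no inversion exponent at all, so they too fall outside $\mathcal H^\epsilon_{m'}$. The step that genuinely needs care — and the one I expect to be the main obstacle — is verifying that no modulus $m'$ of the given (large) bitsize is simultaneously cubing-invertible and has $H_{m'}\le\epsilon$, since such an $m'$ would leak only the single element $1$ and give $K^\epsilon_{m'}=1$; ruling this out is where the size of $p$ and the admissible range of $\epsilon$ must be invoked. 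Granting that, every competitor satisfies $K^\epsilon_{m'}\ge 2=K^\epsilon_m$, which is the claim.
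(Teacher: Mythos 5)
Your proposal is correct and follows essentially the same route as the paper's proof: the same analysis of the possible orders $\set{1,2,q,2q}$ in $\ZZ_m^*$ for a safe prime $m=2q+1$, the same explicit values $\inv 3 \mmod q = \frac{q+1}{3}$ and $\inv 3 \mmod 2q = \frac{4q+1}{3}$ with the at-most-two-bit gap between them, and the same lower bound $K^\epsilon_{m'}\geq 2$ coming from the unavoidable fixed points $1$ and $-1$ of cubing. The one step you flag as the ``main obstacle'' --- ruling out a competitor modulus $m'$ of the same bitsize with $H_{m'}\leq\epsilon$ --- is not resolved in the paper either (its lower-bound argument is only stated for moduli satisfying $\epsilon<H_{m'}$), so on that point your write-up is, if anything, more explicit than the published proof.
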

\begin{proof}
First note that
\begin{equation*}
3^{-1} \mmod \mathrm{ord}(x) = 
\begin{cases}
\frac {2\mathrm{ord}(x) + 1} 3 & \text{if }\mathrm{ord}(x) \equiv 1 \mmod 3 \\
\frac {\mathrm{ord}(x) + 1} 3 & \text{if }\mathrm{ord}(x) \equiv 2 \mmod 3 \\
\text{non-existing} & \text{if }\mathrm{ord}(x) \equiv 0 \mmod 3
\end{cases}
\end{equation*}

For every $m$, $1, -1 \in \ZZ_m^*$. So there are at least two elements where $x^{3} \equiv x \mod m$. Hence, for all $\epsilon < H_m$, $|\mathcal H^\epsilon _m | \leq |\ZZ_m^*| -2$. In particular, $K^\epsilon _m \geq 2$.

If $m=2q+1 > 7$ is a safe prime, $\forall x \in \ZZ_m^*\;\; \mathrm{ord}(x) \in \set{1, 2, 2q, q}$.
Moreover, we have $\gcd(2q, 3) =1$ and $\gcd(q, 3) = 2$.
So, $\inv 3 \mmod 2q = \frac {4q+1} 3$ and $\inv 3 \mmod q = \frac {q+1} 3$. 
The difference in bitsize is at most 2. Since the only elements of order $1$ or $2$ are $1$ and $-1$, $|\mathcal H^3 _m| = |\ZZ_m^*| -2$.
\qed
\end{proof}

The above theorem proves that, using safe primes, we minimise the number of elements whose cube root is easy to compute.
In particular, it shows that computing the cube root requires an exponentiation where the exponent has a very similar bitsize to the prime modulus.
Luckily, any safe prime $p > 7$ also satisfies $\gcd(p-1, 3) = 1$, meaning that cubing is an invertible operation for all $x \in \mathbb Z_p^*$.
We also note that the number of elements with order $q$ is $q-1$ which is equal to the number of elements of order $2q$.
Inverting an element of order $2q$ requires at most 4 more modular multiplications than inverting an element of order $2q$.
Thus, we do not believe there is practically any difference in the delay times.
It follows that when using a padding scheme $\textsf{pad}\colon \mathcal M \to \mathbb Z_p^*$, one only needs to ensure that $p$ does not divide $\textsf{pad}(m)^2 - 1$.
This simple check will ensure that \textsf{pad}$(m)$ does not have order 1 or 2.

\subsubsection{Randomisation in the padding scheme}
\label{sec:randPadding}
When using a deterministic padding scheme, the scheme proposed is not safe as any attacker can ``guess'' the plaintext if the message space is limited.
Consider a scenario where Alice and Bob are trying to flip a coin by sending each other 1 bit and then taking their sum.
Alice and Bob are using delay encryption to ensure neither of them can decide their move after seeing their opponent's choice.
In this scenario, the message space is limited to two options: ``1'' or ``0''.
It follows that Alice must use some random seed in her encryption to stop Bob from guessing what she encrypted and encrypt the result to confirm it.
Since Bob could compute the encryption of ``1'' with many seeds, Alice needs to use a seed long enough so that Bob has a negligible probability of guessing the correct seed  before the delay expires. Note that Bob can perform an expensive precomputation step before receiving the message from Alice and that checking multiple guesses can be performed in parallel.

Say that Alice and Bob are using a prime $p$ of $n$ bits. Alice's seed has length $l$ bits.
To decrypt a message you need at least $n-1$ sequential modular multiplications. 
To encrypt a message you need 2 sequential modular multiplications, where we neglect the cost of the padding scheme used.
Assume that Bob has access to $C$ concurrent computers.
It follows that, in the delay period (i.e.\@ $n-1$ modular multiplications), Bob can perform  $C \frac{n-1} 2$ encryptions.
Therefore, Bob can guess $C \frac {n-1} 2$ random seeds. 
Assuming that the seed is drawn at random from the set of all the $l$-bit strings, then Bob has probability $C \frac {n-1} {2^{l+1}}$ of guessing the correct seed.
Say that Alice wants to bound such a probability by some value $\varepsilon$, then she obtains the following inequality:
\[
l > \log C + \log(n-1) - \log \varepsilon - 1
\]
Picking some values such as $C= 10^{12}$ and $\varepsilon = 10^{-12}$ we obtain $l > 79 + \log (n-1)$.
The above equation represents the bare-minimum requirement for $l$.
However, Bob could perform an extensive precomputation before receiving Alice's message. 
This scenario is somewhat equivalent to Bob trying to guess an $l$-bit key for a symmetric cipher.
As a result, using $l = 256$ should provide enough security for most applications. 
It is important to point out that, in a post-quantum scenario, Grover's algorithm essentially halves the bitsize of search space. 
Therefore, our choice of $l=256$ is expected to guarantee the classical equivalent of 128 bits.
Since our scheme uses primes with thousands of bits, there is practically no upper bound on the number $l$ that Alice can pick.

\section{Chaining the delays}\label{sec:chain}
In the previous section, we showed how to construct a simple delay encryption based on cubing.
As we will discuss in Section \ref{sec:attacks}, the larger the prime $p$ is the easier it is to parallelise the underlying integer multiplication.
It follows that there is an upper bound on the delays one can achieve by cubing.
The purpose of this section is to mitigate this issue by reliably chaining together multiple delays.
The techniques described in this section will also apply to other time-lock puzzles, but we keep our focus on the cubing scheme from Section \ref{sec:tre}.

Let $p$ be a safe prime and define the function $f_p \colon \mathbb Z_p \to \mathbb Z_p$ by $f_p(x) = x^3 \mmod p$.
Our aim is to repeatedly use $f$ to obtain a long delay. 
In particular, we want to find an efficiently invertible function $g_p \colon \mathcal K \times \ZZ_p \to \ZZ_p$ to interleave with $f_p$.
After generating a vector of parameters $\vec{k} = \vector{k_1, \dots, k_n}$, we can construct the function 
$h^{n}_p(x) = (f_p \circ g_p(k_1) \circ \cdots \circ f_p \circ g_p(k_n)) (x)$ as the new delay encryption.\footnote{
In the above description we fixed a safe prime $p$. However, one could use different primes for each use of $f$.
The main disadvantage of this approach is that the sender is required to compute extra safe primes. 
In addition, we do not believe that using different primes gives any security guarantees.
}
The use of $\vec{k}$ for the function $g_p$ is an expedient to allow $g_p$ to receive extra input.
Although the ratio between encryption time and decryption time does not increase, using this construction one can achieve longer delays.

On the theoretical side, $h^n_p$ is a permutation $\mathbb Z_p \to \mathbb Z_p$ and it can be represented using a Lagrange polynomial.
Therefore, it is possible to compute $\inv {(h_p^n)}$ by finding the root of a polynomial in $\mathbb Z_p$. 
As noted in \cite{vdf}, computing a polynomial gcd is sufficient to solve this problem. 
In detail, to invert an injective polynomial $f(x)$ at a point $c$ is enough to compute $\gcd\left(f(x)-c, x^p -x\right)$ since $\inv f(c)$ is the only common root to both polynomials.
Luckily, determining the Lagrange polynomial for $h^n_p$ can be problematic given the size of $p$ if $h^n_p$ is unpredictable enough.
To make $h^n_p$ unpredictable, we must rely entirely on $g_p$ as $f_p$ has a simple algebraic form.
It follows that $g_p$ must be an invertible function which cannot be represented with a simple closed form in $\mathbb Z_p$. 

\subsection{Chaining candidates}
In the setting of verifiable delay functions, a similar problem was studied by Lenstra and Wesolowski \cite{sloth} and by Boneh \etal \cite{vdf}
In particular, Lenstra and Wesolowski \cite{sloth}  prove that $h^n_p$ is secure when $g_p$ is a permutation picked at random and the keys are all equal.
Given the size of $p$, constructing a truly random permutation becomes prohibitive. Therefore, we aim to use appropriate pseudo-random permutations.
It is interesting to note that both papers also suggest simple methods which are likely to be effective in practice, even if there is no theoretical guarantee. 

\subsubsection{Algebraic methods}
For the sake of completeness, we present two algebraic methods used in the context of VDFs, although they lack any security guarantee.
Lenstra and Wesolowski \cite{sloth} proposed the function ``swapping neighbors'' given by:
\[
g_p(x) = 
\begin{cases} 
x+1 & \text{if $x$ is odd} \\
x-1 & \text{if $x$ is even}
\end{cases}
\]
Note that $h^2_p(x) = (x^3 \pm 1)^3 \pm 1 \mmod p = x^9 + 3 x^3 \pm' 3x^6 \pm' 1 \pm 1 \mmod p$ where all $\pm'$ must be the same.
It follows that there is roughly a 25\% chance that $h^2_p(m) = m^9 + 3m^6 + 3 m^3 \mmod p$ for some message $m$. Assume $c$ is the ciphertext (and that $c\neq 0$), then $\gcd(z^9 + 3z^6 + 3z^3 - c, z^{p-1} -1) = z-m$. 
As a result, we believe that the use of the ``swapping  neighbors" function should be avoided.

Boneh \etal \cite{vdf} proposed a new permutation $g_p$ defined over $(\mathbb Z_p)^2$.
In particular, they use the map $g_p(x, y) = (y+c_1, x+c_2)$ for some constants $c_1, c_2$.
Despite its simplicity, there is no apparent security flaw in this construction. 
However, it does require defining $f_p$ over $(\mathbb Z_p)^2$ and clearly $g_p$ is very far from being a secure pseudo-random permutation.

\subsubsection{Format Preserving Encryption} 
A natural candidate for the function $g_p$ is an encryption scheme. 
The field of Format Preserving Encryption focuses on constructing pseudo-random permutations on arbitrary domains.
The FPE schemes approved by the NIST \cite{fpenist} are based on Feistel Networks and aim to encrypt elements of $\Sigma^n$ where $\Sigma$ is an arbitrary alphabet. The constraint $n >1$ rules out the use of $\Sigma = \ZZ_p$, meaning that none of those encryptions can be used on the domain $\ZZ_p$.
However, these schemes can be used to encrypt messages in the domain $\bin^n$ where $n$ is the bitsize of $p$.
Using this in conjunction with the technique of ``cycle walking`` as described  by Black and Rogaway \cite{cycle} gives an encryption $\ZZ_p \to \ZZ_p$.
Given an encryption $E: \{0,1\}^n \to \{0,1\}^n$ and a message space $\mathcal M \subset \bin^n$, the construction of Figure \ref{fig:cyclewalking} results
in an encryption $\mathcal M \to \mathcal M$.
\begin{figure}[h]
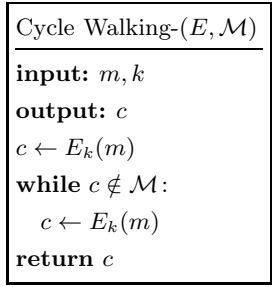

\procb{Cycle Walking-$(E, \mathcal M)$}{
\pcinput m, k\\
\pcoutput c\\
c \gets E_k(m)\\
\pcwhile c \notin \mathcal M\colon\\
\t c \gets E_k(m)\\
\pcreturn c
}
\caption{Cycle walking construction.}
\label{fig:cyclewalking}
\end{figure}

If the encryption $E$ represents a permutation of $\{0,1\}^k$, then this method traverses the cycle of $x$ under $E$ in $\{0,1\}^n$ until it hits an element in $\mathcal M$.
This method is very effective if the ratio $\frac {|\mathcal M|} {2^n}$ is high enough to avoid long encryption chains. 
Moreover, if $E$ is a truly random permutation of $\{0,1\}^n$, then this method will yield a truly random permutation of $\mathcal M$.

We propose a novel efficient FPE method to use in conjunction with the cycle walking construction.
However, we leave its security analysis to a later work.
We write $s[a, b)$ to mean the substring of $s$ starting from the $a$\textsuperscript{th} character (included) and stopping at the $b$\textsuperscript{th} character (excluded).
Figure \ref{fig:bill} describes the algorithm. The underlying idea is to encrypt the input from both ends, each time leaving out a few bits.
\begin{figure}[h]
\procb{Both-Ends Encryption}{
\pcinput s \in \bin^l\\
\pcoutput r \in \bin^l\\
E \gets \text{block cipher with block size } b\\
k_1, k_2 \gets \text{keys for } E\\
x \gets l \mmod b\\
t \gets E_{k_1}\left(s[0, b)\right) \concat E_{k_1}\left(s[b, 2b)\right) \concat \dots \concat E_{k_1}\left(s[l-x-b, l-x)\right) \concat s[x, l)\\
r \gets t[0, x) \concat E_{k_2}\left( t[x, x+b) \right) \concat E_{k_2}\left(t[x+b, x+2b)\right) \concat \dots \concat E_{k_2}\left(t[l-b, l)\right)\\
\pcreturn r 
}
\caption{FPE encrypting from both ends.}
\label{fig:bill}
\end{figure}
The FPE algorithms described here are likely to be the most efficient choice chaining candidate described in this section.
In particular, if we pick the prime $p$ to have a bitsize close to a multiple of the block size of a good block cipher, then we can use the cycle walking technique without the need of any other FPE scheme.
Assuming the use of a 128-bit-block-sized cipher such as AES, we looked at all the safe primes with a bitsize in the range $[30 000, 100 000]$ on the database \url{prime.utm.edu}.
The only primes that we consider practical are:
\begin{enumerate}
\item $1030710193 \times 2^{44001} + 3$ bitsize: $44031 = 344\cdot 128 -1$
\item $1022253375 \times 2^{43489} - 1$ bitsize: $43519 =  340 \cdot 128 -1$
\item $168851511 \times 2^{33251} - 1$ bitsize: $33279 = 260 \cdot 128 -1$
\end{enumerate}
Using any of the above primes, roughly 2 or 3 attempts will be required on average.
As a result, the number of encryptions varies from 520 to 1032 on average.
This is drastically less than the number of encryption needed when using Both-Ends Encryption and cycle walking (on average 2188 if $p$ has 70034 bits).
Using the FF1 method approved by NIST \cite{fpenist} and cycle walking requires on average 5520 encryptions.
Given the highly optimised AES instruction set that recent CPU support, these numbers of encryptions can be carried out relatively efficiently.
To better understand the relative speed of this chaining method, we used the open-source library LibreSSL \cite{libressl} to benchmark AES.
Using the same setup that delays 70034-bit numbers in 0.8ms, AES achieves a speed of roughly 4GB per second. 
Hence, 5520 encryptions can be performed in roughly 0.02ms. 
Table \ref{tab:chainspeed} shows that the speed of the various chaining methods and proves that the FPE approach is the most efficient.

\subsubsection{Card Shuffling}
Card shuffling schemes are essentially families of algorithms to construct ciphers where the domain can be chosen at will.
Assume the message space is $\{0, \dots, N-1\}$, then message $i$ is represented by the $i$\textsuperscript{th} card in a deck of $N$ cards.
After shuffling the deck, that card will end up in position $j$ for some $j\in \{0, \dots, N-1\}$, therefore the encryption of $i$ is $j$.
An oblivious shuffle is a shuffle where the trajectory of a card can be computed without calculating the paths of the other cards.
In the literature, we could find only a few FPE schemes based on oblivious card shuffling \cite{son,srs,thorp,mac}.
We now proceed to present them.

The earliest card shuffle present in the literature is the Thorp shuffle described in Figure \ref{fig:Thorp}.
Following the description by Morris \etal \cite{thorp} the shuffle is performed as follows: cut the deck in half and pick up each half; according to a coin flip, drop the last card from the left deck, then the right deck or vice versa.
Figure \ref{fig:Thorp} shows the algorithm for the shuffle where only the trace of the input is computed.
In order to reverse the shuffle, the sampling of the coin flip (variable $b$ in Figure \ref{fig:Thorp}) must be deterministic.
As with the other card shuffles, in order to guarantee high security, this cipher will need $R \in O(\log p)$. 
Given our large sizes of $p$, this approach is quite expensive.
\begin{figure}[h]
\procb{Thorp}{
\pcinput N, x \in \set{0, \dots, N-1}\\
\pcoutput y \text{ the new position of the card } x\\
\pcrepeat{R}\colon\\
\t y \gets x + \frac N 2 \mod N\\
\t b \sample \bin\\
\t \pcif x < y\colon\\
\t\t x \gets 2x+1-b\\
\t\pcelse \colon \\
\t\t x \gets 2x+b\\
\pcreturn x
}
\caption{Thorp shuffle.}
\label{fig:Thorp}
\end{figure}

The swap-or-not shuffle by Hoang \etal \cite{son} follows a similar structure: pair cards $x$ and $K-x\mmod N$ and swap the cards according to a coin flip.
Figure \ref{fig:SoN} shows the algorithm for this shuffle and, once again, the coin flip must be deterministic.
Hoang \etal\  suggest the use of round functions $F_i$ that should be applied to $\max(x, y)$.
As a result, the sequence of coin flips cannot be precomputed since each flip depends on the current state of the algorithm.
\begin{figure}[h]
\procb{Swap-Or-Not}{
\pcinput N, x \in \set{0, \dots, N-1}\\
\pcoutput y \text{ the new position of  the card } x\\
\pcrepeat {R} \colon\\
\t K \sample \set{0, \dots, N-1}\\
\t y \gets K-x \mmod N\\
\t b \sample \bin\\
\t \pcif b=1 \colon\\
\t[2] x \gets y\\
\pcreturn x
}
\caption{swap-or-not shuffle.}
\label{fig:SoN}
\end{figure}

Ristenpart and Yilek \cite{mac} designed the  mix-and-cut shuffle as a way of strengthening other shuffles. 
In a later paper by Morris and Rogaway \cite{srs}, the mix-and-cut shuffle with swap-or-not is slightly modified to improve efficiency without impacting the security guarantees.
Figure \ref{fig:MaC} shows the algorithm for this improved version.
On an abstract level, the construction can be described as follows: a ``standard" shuffling is used, then the deck is cut in half and recursively only the first half is shuffled.
\begin{figure}[h]
\procb{Mix-And-Cut}{
\pcinput N, x \in \set{0, \dots, N-1}\\
\pcoutput y \text{ the new position of the card } x\\
\pcif N =1\colon \pcreturn 0\\
x \gets \text{Swap-Or-Not}(N, x)\\
\pcif x < \frac N 2 \colon\\
\t \pcreturn \text{Mix-And-Cut}\left(\frac N 2, x\right)\\
\pcreturn x
}
\caption{mix-and-cut shuffle.}
\label{fig:MaC}
\end{figure}
As mentioned earlier, the use of card shuffling ciphers is relatively inefficient as the number of iterations for each shuffle is  $O(\log p)$.
In particular, we implemented the three shuffles mentioned above using the GMP library \cite{GMP}. 
By setting $N$ to be our usual prime of 70034 bits and $R = 70000$, we tested these shuffling algorithms and, discounting the cost of sampling the random keys and coin flips, we obtained the timings  shown in Table \ref{tab:chainspeed}. 
This shows that the computation of these shuffles would dominate the overall complexity of the chained delay $h^n_p$.
On the other hand, these ciphers are the candidates which more closely represent a random permutation.
Therefore, applications of the chaining techniques that require very high security should use one of the card shuffling ciphers described here.
It is important to note that in our use case, there is no need for the shuffling to withstand cryptanalysis. 
As a result, we believe that using a shuffle with only few iterations (small $R$) it is very likely to provide enough security for our needs.

\begin{table}[h]
\centering
\makebox[\textwidth]{
\begin{tabular}{|c  |c |c |c| c|}
\hline
Cubing & AES-256 & Thorp & Swap-Or-Not & Mix-And-Cut\\
\hline
0.8ms & 0.02ms & 37ms & 47ms & 83ms\\
\hline
\end{tabular}}
\caption{Comparison of chaining methods' speed.}
\label{tab:chainspeed}
\end{table}

\subsection{Conclusion}
Using the chaining construction, one can extend the delays constructed via cubing.
Long delays achieved via long chains are still impractical since this chaining construction does not increase the ratio between encryption time and decryption time.

Among the few candidates we presented, we believe that using FPE methods with standard block ciphers is the most balanced option.
They are efficient methods and provide enough security for our particular use in chains.
We do not believe it is necessary to change keys between the applications of $g_p$ in the same chain. 
However, we discourage using the same keys for multiple chains.

Card shuffling schemes should be used when the highest degree of security is needed.
In this context, we suggest using randomly generated keys. 
Moreover, the bit sampling present in the shuffles should depend on the entire internal state of the algorithm. 
This is likely to result in the most unpredictable permutations, since neither the keys nor the coin flips can be computed before they are required.

Despite their frequent use, the algebraic methods represent the least secure option. 
As noted at the start of Section \ref{sec:chain}, the ability to express the whole chain as a polynomial results in a simple attack.
Therefore, we suggest using other alternatives.

\section{Breaking the delay}
\label{sec:attacks}
The only security requirement for a delay encryption scheme is that nobody should be able to decrypt the delayed message before its intended delay.
Since chaining can be done securely, in this section we analyse possible attacks to the cubing operation.
In particular, we aim at estimating the speedup factor that a resourceful entity can gain against an average user.
We focus on four different attack vectors: exponentiation, modular reduction, integer multiplication and hardware performance.

\subsection{Repeated squarings}
Since the introduction of the RSA cryptosystem, researchers have looked at ways to optimise the modular exponentiation process.
Let $a, b$ and $m$ be integers with $b$ having $n$ bits. 
All the optimisations present in the literature aims at reducing the number of modular multiplications needed to compute $a^b \mod m$.
However, there seems to be an implicit lower bound of $n-1$. 
As an example, we describe the state-of-the-art ``sliding window" algorithm.
We write $b[i, j)$ for the integer $\floor{ (b  - \floor {b / 2^{n-i}} 2^{n-i}) / 2^{n-j}}$ (i.e.\@ the number composed by $b$'s bits in positions $\set{i, i+1, \dots, j-1}$).
Assuming $n = Nw$, $b = \sum_{i=0}^{N-1} b[iw, iw+w) 2^{iw}$, so $a^b = \prod_{i=0}^{N-1} a^{2^{iw}b[iw, iw+w)}$. 
Figure \ref{fig:exp} shows the pseudocode for this algorithm.
\begin{figure}[h]
\procb{Sliding Window Exponentiation}{
\pcinput a, b, w\\
\pcoutput a^b\\
\text{construct table } T \text{ of size } 2^w\\
\pcforeach x \in \bin^w\colon\\
\t T[x] = a^x\\
c\gets T[b[0, w)]\\
i \gets w\\
\pcrepeat {\frac n w -1}\colon\\
\t \pcrepeat {w}\colon\\
\t[2] c \gets c^2\\
\t c \gets c\cdot T[b[i, i+w)]\\
\t i \gets i+w\\
\pcreturn c
}
\caption{Sliding window exponentiation assuming $n = 0 \mod w$.}
\label{fig:exp}
\end{figure}
The GMP library \cite{GMP} implements a slightly more optimised version of this algorithm.
Similar to all the other methods present in the literature, this technique aims at reducing the number of multiplications required apart from the repeated squaring. As we will explain below, most of these extra multiplications can be parallelised, therefore these nuances are less important since we are interested only in the sequential cost of modular exponentiation.
In this regard, there is an implicit lower bound of $n-1$ multiplications.
By parallelising the above algorithm, one can get extremely close to this lower bound.
Write $b = \sum_{i=0}^{N-1} b_i 2^{iw}$, i.e.\@ $b_i = b[i, i+w)$.
For $i \neq j$ $a^{b_i 2^{iw}} $ and $a^{b_j 2^{jw}} $ can be computed independently and multiplied together as they become available.
In a perfect world, when the computation of $a^{b_{N-1}2^{(N-1)w}}$ is completed, all other values will have been multiplied together, so that only one extra multiplication is needed.
As a result, on average, the number of sequential  multiplications needed is $n+ \frac {w-1} 2$.
In Figure \ref{fig:timeline}, this idea is represented graphically.

\begin{figure}[h]
\centering
\begin{tikzpicture}[scale=0.55]
	\begin{pgfonlayer}{nodelayer}
	    \node [style=none] (16) at (-9, 4.1) {$a^{b_0}$};
		\node [style=none] (0) at (-8, 4) {};
	    \node [style=none] (17) at (-9, 2.1) {$a^{b_12^w}$};
		\node [style=none] (1) at (-8, 2) {};
		\node [style=none] (18) at (-9, 0.1) {$a^{b_2 4^w}$};
		\node [style=none] (2) at (-8, 0) {};
		\node [style=none] (3) at (-8, -2) {$\vdots$};
		\node [style=none] (19) at (-10, -3.9) {$a^{b_{N-1} 2^{(N-1)w }}$};
		\node [style=none] (4) at (-8, -4) {};
		\node [style=none] (5) at (-6.75, 4) {};
		\node [style=none] (6) at (-4.75, 2) {};
		\node [style=none] (20) at (-2, 3.2) {$a^{b_0 + b_12^w}$};
		\node [style=none] (7) at (-3, 0) {};
		\node [style=none] (8) at (3, -4) {};
		\node [style=none] (9) at (3, -2) {};
		\node [style=none] (10) at (-3.75, 3) {};
		\node [style=none] (11) at (-2, 1) {};
		\node [style=none] (12) at (4.5, -3) {};
		\node [style=none] (20) at (5, -2.8) {$a^b$};
		\node [style=none] (13) at (-0.25, -0.25) {};
 		\node [style=none] (14) at (1, -2) {};
		
	\end{pgfonlayer}
	\begin{pgfonlayer}{edgelayer}
		\draw [thick] (0.center) to (5.center);
		\draw [thick](1.center) to (6.center);
		\draw [thick](5.center) to (10.center);
		\draw [thick](6.center) to (10.center);
		\draw [thick](2.center) to (7.center);
		\draw [thick](7.center) to (11.center);
		\draw [thick](10.center) to (11.center);
		\draw [thick](4.center) to (8.center);
		\draw [thick](8.center) to (12.center);
		\draw [thick](12.center) to (9.center);
		\draw [thick, dashed](11.center) to (13.center);
		\draw [thick, dashed](9.center) to (14.center);
	\end{pgfonlayer}
\end{tikzpicture}
\caption{Parallelisation of the decryption process.}
\label{fig:timeline}
\end{figure}
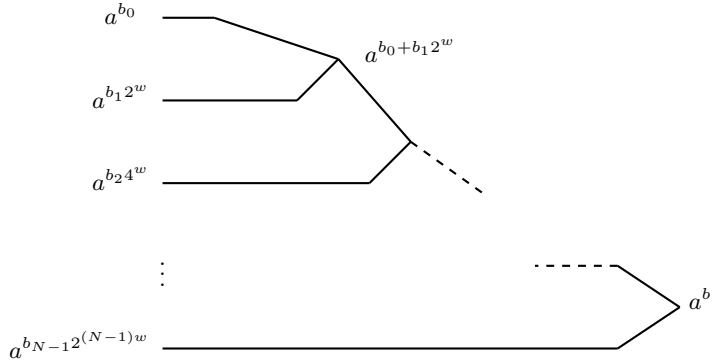

Essentially, the above approach reduces the exponentiation process to computing only the repeated squaring. 
This shows that it is unreasonable to assume that the exponentiation $a^b$ requires more than $\lfloor \log b \rfloor$ sequential multiplications. 
On the other hand, the lack of algorithms that can outperform the process of repeated squaring is a favourable sign for our Conjecture \ref{seqAssump}.
Under the assumption that repeated squaring cannot be parallelised, the only attack vector left is improving each singular modular multiplication.
There are mainly two approaches to compute a modular multiplication $ab \mmod c$.  The naive approach computes the multiplication $ab$ and then reduces it  modulo $c$.
The direct approach computes $ab \mmod c$ in a single algorithm by keeping most of the intermediate values below $c$.
This latter approach can be much more memory efficient than the former. However only a few such algorithms can be found in the literature (e.g.\ \cite{Chen}).
On the other hand, the naive approach seems to be the most widely used as one can take advantage of the extensive research on both integer multiplication and modular reduction.
Since memory restrictions are not a concern in our setting, we will focus on the fastest methods for squaring and modular reduction.

\subsection{Modular reduction}
In this section we look at the algorithms for computing modular reduction and how to use those during modular exponentiation.
Our aim is to find the best approach within the setting of repeated squaring.
Let $b$ be an integer of $n$ bits and $a$ an integer of $m$ bits with $m \geq n$. We aim to compute $a \mmod b$.
\footnote{In the literature one often finds the extra constraint that $m \leq 2 n$. 
In this section we will try to avoid such assumption, and we will clearly state when this is not the case.}

We have discovered only three different approaches to modular reduction in the literature.
The most common are Montgomery's algorithm \cite{Montgomery} and Barret's reduction \cite{Barret}. The third approach is based on table lookup.

Barret's algorithm is based on the equality $a \mod b = a - \floor{ \frac a b} b$. 
\begin{figure}[h]
\begin{minipage}{0.47\textwidth}
\procb{Barret}{
\textbf{input: } a, b\\
\textbf{output: } a \mmod b\\
q \gets \floor{\frac {2^{m+1}}{b}} (\text{can be precomputed})\\
r \gets a - \left(\left( a \cdot q \right) \gg \left( m+1 \right) \right) \cdot b\\
\pcif r > b\colon\\
\t r \gets r-b\\
\pcreturn r
}
\caption{Barret's reduction algorithm.}
\label{fig:barret}
\end{minipage}
\begin{minipage}{0.47\textwidth}
\procb{REDC}{
\textbf{input: } a, b, R\\
\textbf{output: } a\inv{R} \mmod b\\
b' \gets \inv{b} \mmod R\;\; (\text{can be precomputed})\\
m \gets \left(a \mmod R\right) \cdot b' \mmod R\\
t \gets \left( a + m \cdot b\right) / R\\
\pcif t > b\colon\\
\t t \gets t-b\\
\pcreturn t
}
\caption{Montgomery's reduction algorithm.}
\label{fig:mont}
\end{minipage}
\end{figure}

The description in Figure \ref{fig:barret} differs from the original version described by Barret \cite{Barret} as we use $q \coloneqq \left\lfloor \frac {2^{m+1}} b \right\rfloor$ rather than 
$q \coloneqq \left\lfloor \frac {2^{m}} b \right\rfloor$. This allows us to perform only one subtraction by $b$ rather than two. 
The extra bit in $q$ is unlikely to cause any performance hit.
Therefore, the overall time complexity is: 2 multiplications and 2 subtractions. All of these involve $m(+1)$ bit numbers, apart from the last subtraction.
Note that computing $q$ may require a considerable amount of precomputation, but in our encryption scheme this needs to be done only once per ``public key''.

Figure \ref{fig:mont} shows the pseudocode for Montgomery's reduction.
A key aspect of this algorithm is that, given $a$ and $b$, it computes $aR^{-1} \mod b$ for some fixed $R$.
In order to use the algorithm of Figure \ref{fig:mont}, the modular exponentiation needs to be carried out in ``Montgomery form''.
That is, to compute $c^2 \mmod b$, the base $c$ needs to be converted into $c_M \coloneqq cR \mmod b$. 
Therefore REDC$(c_M^2) = c^2R \mmod b$. 
The final result will also need to be converted back from the ``Montgomery form'', which can be done by using {REDC} once more.
In order for {REDC} to work correctly, one needs $a < Rb$ and $(R, b) = 1$. 
Moreover, $R$ should be a power of 2, otherwise the algorithm is not worthwhile using.
It follows that the overall time complexity is: 2 multiplications, 1 addition and 1 subtraction.
Comparing this with Barret's reduction, we see that the overall complexity is very similar.

When considering the lookup table approaches, the most recent results we could find are by Cao \etal\cite{Cao} and by Will and Ko \cite{Cao}.
Their approach is relatively simple: one precomputes the map $x \mapsto x \mmod b$ for some specific values of $x$ and then represents $a$ with a suitable vector of $x$'s.
This leads to an addition-based algorithm. Since the number of additions is linear, these approaches have a worse theoretical time complexity.
It follows that we do not believe these approaches can consistently outperform the previous two methods described.
As a result, throughout this paper we will assume that the complexity of the modular reduction consists of two multiplications of $\lfloor \log a\rfloor$ bits and two additions of $\lfloor \log a \rfloor$ bits.

Finally, we need to analyse how to best use modular reduction within the exponentiation process.
Let $p$ be an $n$-bit prime and let $x, y \leq p$ be integers. Consider the process of computing $x^y \mod p$.
If the computation is done in Montgomery form, then the reduction should be interleaved with the multiplication to prevent the factor $R$ from growing.
Otherwise, one may choose to take the reduction modulo $p$ every $m$ steps for any $m\geq 1$.
Let $M(n)$ be the time required to multiply two $n$-bit numbers and let $R(n)$ be the time required to reduce an $n$-bit number modulo $p$.
Interleaving multiplication and reduction leads to a time complexity of $k(M(n) + R(2n))$ for some $k$ which represents the number of modular multiplications needed.
If the reduction is performed only every $m$ multiplications, then the time complexity becomes
\[
\frac k m \left(\sum_{i=0}^{m-1} M(2^i n) + R(2^m n)\right)
\]
Recall that $R(n) \in O(M(n))$, so depending on $M(\cdot)$ the optimal choice for $m$ could be either 1 or $k$.
If multiplication can be done in sequential time $O(\log n)$ (see Section \ref{sec:circ}), then one is better off picking $m=k$.
If $M(n)\geq \alpha n$ for some fixed $\alpha$, then the optimal value for $m$ is 1.

\subsection{Integer multiplication}
Although repeated squaring is a sequential process, integer multiplication is definitely not. 
This makes the study of integer multiplication crucial for understanding where a malicious entity could gain an unfair advantage.

In this section, we analyse the main known integer multiplication algorithms in order to find the one most suited to our application.
Let $a, b$ be two $n$-bit numbers, we aim to compute $ab$.

The first family of multiplication algorithms that we must mention are the so-called Toom-Cook algorithms \cite{Bodrato:WAIFI2007}.
In that paper, the famous Karatsuba-Ofman multiplication is nothing but Toom-2.
\begin{figure}[h]
\procb{Toom-$k$}
{
\pcinput a, b \text{ both $n$ bits long}\\
\pcoutput ab\\
\lambda \gets 2^{\floor{\frac n k }}\\
\text{compute $a_i$'s such that } a = \sum_{i=0}^{k-1} a_i \lambda^i\\
\text{compute $b_i$'s such that } b = \sum_{i=0}^{k-1} b_i \lambda^i\\
A(x) \coloneqq \sum_{i=0}^{k-1} a_i x^i\\
B(x) \coloneqq \sum_{i=0}^{k-1} b_i x^i\\
\text{pick a set } \mathcal P = \set{p_1, \dots, p_{2k-1}}\\
\mathcal V \gets \set{ v_i \mid v_i = A\left(p_i\right) B\left(p_i\right)\;\; \forall\, p_i \in \mathcal P }\\
P(x) \gets \text{ polynomial such that } P(p_i) = v_i \text{ for } 1\leq i \leq 2k-1\\
\pcreturn P(\lambda)
}
\caption{Toom-Cook algorithm.}
\label{fig:toom}
\end{figure}
The asymptotic complexity of the Toom-$k$ algorithm is $O\left(n^{\frac {\log(2k-1)}{\log k}}\right)$.

Figure \ref{fig:toom} describes this multiplication algorithm.
The choice of $\mathcal P$ has a big impact on the algorithm performance, therefore the values $p_i$ are always chosen so that computing $A(p_i)$ and $B(p_i)$ is efficient.
Moreover, the interpolation to obtain $P(x)$ is often hardcoded in order to minimise the number of operations needed.
This algorithm can be parallelised by computing each $v_i$ on concurrent threads.
We attempted modifying GMP's implementation of this algorithm by parallelising it with low-level software, 
but this was not sufficient to achieve any performance improvements when multiplying numbers of hundreds of thousands of bits or smaller.

The choice of the parameter $k$ is crucial and depends both on $n$ and the hardware on which the algorithm is run.
As an example, the GMP library lets the user optimise the thresholds for picking $k$. 
During our testing on numbers with $70034$ bits, we found that the algorithm with $k=8$ gave the best performance.

Another important family of multiplication algorithms is based on Fourier Transforms.
The idea behind these algorithms is similar to the principle behind the Toom-Cook algorithms.
The multiplicands $a, b$ are interpreted as polynomials $A(x), B(x)$. 
The Fast Fourier Transform (FFT) is used to evaluate these polynomials at the root of unity of some field. 
Recursively, the algorithm computes the multiplication of these points, and then interpolates them to obtain the polynomial $P(x) = A(x)B(x)$.

One of the most famous FFT algorithms is the Sch\"onhage-Strassen algorithm \cite{SSGMP} which achieves a time complexity of $O(n \log n \log \log n)$.
This algorithm is very appealing as it works in the polynomial ring $(\mathbb Z/\langle2^k +1\rangle)[x] / (x^T +1)$ where both $k$ and $T$ are powers of $2$.
This allows the algorithm to avoid floating-point calculations and their related precision issues. 
This is likely the most popular FFT-based integer multiplication algorithm, and it is the one implemented within the GMP library.

Another FFT algorithm worth mentioning is the one proposed by Harvey and van der Hoeven \cite{mult}  since its time complexity is $O(n \log n)$.
This algorithm works in a multidimensional polynomial ring over the complex numbers, so any implementation must be careful to handle floating-point numbers with enough precision.
We are not aware of any implementation of this algorithm, and we expect that it will outperform the Sch\"onhage-Strassen algorithm only for very large $n$.

Since the FFT algorithms follow the same structure as the Toom-Cook family, they could also be parallelised relatively easily.
As we will see in the next section, these algorithms are the most common option when trying to parallelise integer multiplication on a GPU.

We would like to point out that the existence of an optimised open-source library, such as GMP, is vital in limiting the performance advantage a malicious entity could obtain via software improvements.
Based on our own experiments, the GMP library is of very high quality in respect of optimisation and coverage.

\subsection{Faster Hardware}
In this section, we discuss how using different hardware might lead to performance improvements.
As far as CPUs are concerned, the difference in single core performance does not appear to be very wide.
In particular, a smartphone's CPU has a clock speed of around 2GHz, while a top tier desktop processor can only reach up to 5.3GHz.
It follows that the gap in clock speed can be bounded by 3.
Differences in the CPU architecture can lead to different performances even with equal clock speed.
To estimate this factor, we used a table \cite{GMPtable} published on the GMP library website. 
This table lists the average cost in terms of clock cycles for many low-level tasks on different CPU architectures.
For instance, summing two $64n$-bits numbers on an AMD Zen 3 processor takes $n$ cycles, while doing the same task on an ARM A5 neon takes $8.66n$ cycles. 
Similarly, using an Intel Atom architecture to square a number with the trivial algorithm takes roughly 9.7 cycles per 32 bits, while performing the same task on an Apple M1 chip takes 1.41 cycles per 64 bits.
These figures show that the gap between old architectures and new ones can be much higher than the gap in clock speed.
As a result, we suggest the cautious approach of assuming that malicious entities could obtain a 15 times speed-up compared to mid-range CPUs.

Another approach that could lead to faster decryption is to perform the repeated squaring on different hardware.
The suggestion of Field Programmable Gate Arrays (FPGA) often appears in the literature.
The use of this technology seems to allow for theoretical improvements, but we do not believe that commercially available FPGAs could have such impact.
As an example, Woo \etal \cite{FPGA1} implements three different multiplication algorithms on FPGAs and  these algorithms can multiply two 8-bit numbers with a latency of 5 cycles. Similarly, Kakacak \etal \cite{FPGA2} claim to achieve better timings than the state of the art in the literature, yet their implementations require 7ns to multiply two 64-bits numbers.
A recent architecture like AMD Zen 2 using GMP requires around 1.75 clock cycles per 64-bits when multiplying numbers. 
Therefore, using a processor like the AMD Ryzen 7 3800X, which has a base clock speed of 3.9GHz, one can multiply two 64-bits numbers in roughly 0.4ns.
It follows that the results in the literature are far from being competitive with more traditional  hardware. 
In addition to the performance gap, the AMD processor mentioned above  costs around £300, which is likely to be much less than the cost of a custom FPGA.
Therefore, we do not believe that, at the time of writing, FPGAs are a viable way to achieve any performance boost.

Another common type of hardware used in the literature are GPUs.
The literature reveals many attempts to parallelise multiplication algorithms on many-core GPUs. 
However, we could not find an exhaustive comparison of different GPU and CPU architectures.
Chang \etal \cite{GPU} used a GTX1070 to multiply two numbers of 192K-bits in 1.12ms. Using GMP and an Intel i7-6920HQ we perform the same multiplication in roughly 5ms.
Emeliyanenko \cite{GPU10x} used a GeForce GTX280 to achieve roughly a 10 times speed-up compared to a quad-core Intel Xeon E5420 on ``moderate'' size multiplicands (around tens of thousands of bits). 
On the other hand, Ochoa-Jiménez \etal \cite{CPUvGPU} implement the RSA scheme on both CPU and GPU. 
After testing their implementations using key sizes up to 3072 bits, they conclude that ``in spite of its massive parallelism, we observe that GPU implementations of RSA are considerably slower than their CPU counterparts''. 

Using only the available results, we suggest the cautious approach of assuming a 20 times speed-up when using a GPU compared to mid-range CPUs.

\subsection{Custom Circuits}
\label{sec:circ}
In this section we analyse whether some party could improve the decryption time by building a custom circuit.
In particular, Theorem \ref{th:circpow} below proves that one can obtain a great speed-up compared to the previous methods.

Here, we use the standard circuit model where we bound the input on any logic gate by two bits.
In this model, one can prove a few simple propositions that we list below.
\begin{proposition}
Let $x, y$ be two $n$-bit numbers. There is a circuit of depth $O(\log n)$ and size $O(n^2)$ that computes $x+y$.
\end{proposition}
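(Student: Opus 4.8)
The plan is to reduce the problem to computing all the carry bits and then to build those carries by a recursive ``conditional-sum'' construction whose depth recurrence is $T(n)=T(n/2)+O(1)$ and whose size recurrence is $S(n)=2S(n/2)+O(n)$; this already yields depth $O(\log n)$ and size $O(n\log n)$, which is comfortably $O(n^2)$.

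First I would fix the bit-level description of addition. Writing $x=x_{n-1}\cdots x_0$ and $y=y_{n-1}\cdots y_0$, the sum has bits $s_i=x_i\oplus y_i\oplus c_i$ with $c_0=0$, $c_{i+1}=(x_i\wedge y_i)\vee\bigl((x_i\oplus y_i)\wedge c_i\bigr)$, and $s_n=c_n$. Each $s_i$ is an $O(1)$-size, $O(1)$-depth function of $x_i,y_i,c_i$, so once every carry is available the remaining work is $O(n)$ gates in $O(1)$ extra depth. Hence it suffices to produce all carries within depth $O(\log n)$ and size $O(n^2)$; after padding $n$ up to the next power of two (which at most doubles $n$ and is absorbed into the $O$-notation) I may assume $n=2^k$.

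Next I would set up the recursion. Let $A(n)$ denote a circuit taking two $n$-bit operands and outputting \emph{simultaneously} the $(n{+}1)$-bit sums corresponding to carry-in $0$ and to carry-in $1$ (in particular it exposes the two possible carry-outs $\ell_0,\ell_1$). To build $A(n)$ from two copies of $A(n/2)$: split each operand into its top and bottom halves; feed the bottom halves to one copy and the top halves to another; then, for each overall carry-in value $c\in\{0,1\}$, take the low $n/2$ bits of the bottom result for carry-in $c$, read off its carry-out $\ell_c$, and use a one-bit multiplexer of width $n/2+1$ (so $O(n)$ gates, $O(1)$ depth) to select between the two precomputed top results according to $\ell_c$, and concatenate. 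This uses two recursive instances plus $O(n)$ extra gates in $O(1)$ extra depth, so $S(n)=2S(n/2)+O(n)=O(n\log n)$ and $T(n)=T(n/2)+O(1)=O(\log n)$, with $A(1)$ a constant-size base case. Instantiating $A(n)$ on $x,y$ and reading the carry-in-$0$ output then gives $x+y$ with depth $O(\log n)$ and size $O(n\log n)\subseteq O(n^2)$.

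I do not expect a genuine obstacle. The one point that needs care is phrasing the recursion so that each sub-adder returns \emph{both} carry-in variants, since that is exactly what keeps the size recurrence at $2S(n/2)+O(n)$: computing each carry independently from its generate/propagate disjunctive form would cost $\Theta(n^3)$ gates (the $i$-th carry needs conjuncts of total size $\Theta(i^2)$), which already exceeds the allowed $O(n^2)$, so some sharing across carries is essential — the slack in the $O(n^2)$ bound just means we need not optimise the prefix structure to linear size. The remaining verification (that the multiplexer combination reproduces $c_{i+1}$) is routine Boolean bookkeeping. As an alternative I could instead invoke the classical carry-lookahead adder, expressing the carries through the associative operator $(g,p)\bullet(g',p')=\bigl(g'\vee(p'\wedge g),\,p\wedge p'\bigr)$ and computing all of its prefixes with a balanced parallel-prefix (scan) network of depth $O(\log n)$ and size $O(n\log n)$; this gives the same bounds and again fits inside $O(n^2)$.
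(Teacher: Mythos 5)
Your conditional-sum construction is correct: the recurrences $T(n)=T(n/2)+O(1)$ and $S(n)=2S(n/2)+O(n)$ do give depth $O(\log n)$ and size $O(n\log n)\subseteq O(n^2)$ with fan-in~2 gates, and the base case and multiplexer step check out. The paper states this proposition without proof, as a standard fact of circuit complexity; your argument (or the parallel-prefix carry-lookahead alternative you mention) is exactly the classical justification it implicitly invokes, and in fact establishes the slightly stronger size bound $O(n\log n)$.
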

\begin{proposition}
Let $x_1, \dots, x_k$ be $k$ $n$-bit numbers. There is a circuit  of depth $O(\log n + \log k)$ and size $O(nk + n^2)$ that computes $\sum_{i=1}^k x_i$.
\end{proposition}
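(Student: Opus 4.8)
The plan is to reduce the many-operand sum to a single two-operand addition by means of a carry-save (Wallace/Dadda) tree, so that a logarithmic-depth carry propagation is incurred only once rather than once per level of an addition tree. First I would record the bound that keeps all intermediate data small: since $\sum_{i=1}^{k} x_i < k\cdot 2^n$, every partial sum fits in $w \coloneqq n + \lceil \log k\rceil$ bits.

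Next I would set up the basic gadget, a full adder: in the two-input circuit model this is a constant-size, constant-depth circuit taking three bits to their two-bit sum. Applying one full adder per bit position to three $w$-bit operands converts them, in depth $O(1)$ and size $O(w)$, into two operands of at most $w+1$ bits with the same total (one carry-save step, with carry propagation discarded). I would then iterate: each step turns $r$ operands into roughly $\tfrac{2}{3}r$ operands, so after $O(\log k)$ steps only two operands remain; the depth so far is $O(\log k)$, the operand width never exceeds $n + O(\log k)$, and since the operand counts across the steps form a geometric series summing to $O(k)$, the total size of the tree is $O(kw) = O(kn + k\log k)$. I would then remark that the stray $k\log k$ term (together with the lower-order terms below) is absorbed into $O(kn + n^2)$ whenever $\log k = O(n)$, i.e., $k = 2^{O(n)}$, which always holds in practice and in particular in the intended application (circuits for integer multiplication), where $k = O(n)$.

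Finally I would invoke the previous proposition to add the two surviving words, each of at most $w$ bits: this costs depth $O(\log w) = O(\log n + \log k)$ and size $O(w^2) = O(n^2)$ (again under $\log k = O(n)$). Adding everything up, the circuit has depth $O(\log k) + O(\log n + \log k) = O(\log n + \log k)$ and size $O(kn + n^2)$, as required.

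The main thing to get right — and the reason this is not just an immediate corollary of the previous proposition — is the depth analysis: a balanced binary tree of ordinary two-operand additions would have depth $\Theta(\log k \cdot \log n)$, which is too large, so it is essential that each compression level costs only $O(1)$ depth and that a true carry-propagating addition is performed exactly once, at the very end. The width bookkeeping and the geometric-series size estimate are routine; the only point needing genuine care is checking that a carry-save step preserves the total and widens the operands by at most one bit.
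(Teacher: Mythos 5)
Your proof is correct: the paper states this proposition without proof (it appears only as one of a few ``simple propositions'' one can prove in the bounded fan-in circuit model), and the carry-save/Wallace-tree compression followed by a single carry-propagating addition is exactly the standard argument one would supply here. Your identification of the real issue --- that a balanced tree of full two-operand additions would cost depth $\Theta(\log k\cdot\log n)$ --- together with the width bookkeeping $w=n+\lceil\log k\rceil$ and the caveat about the degenerate regime $\log k\gg n$, is all sound.
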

\begin{proposition}
Let $x, y$ be two $n$-bit numbers. There is a circuit of depth $O(\log n)$ and size $O(n^2)$ that computes $xy$.
\end{proposition}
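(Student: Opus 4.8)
The plan is to reduce multiplication to an iterated addition and then appeal to the preceding proposition on summing many numbers. First I would write $y = \sum_{i=0}^{n-1} y_i 2^i$ with each bit $y_i \in \{0,1\}$, so that $xy = \sum_{i=0}^{n-1} p_i$ where $p_i := (x\, y_i)\, 2^i$ is the $i$-th partial product. The first step is to build each $p_i$ in parallel: AND every bit of $x$ with the single bit $y_i$ (which yields $x$ or $0$) and then shift left by $i$ positions, the shift being merely a relabelling of wires. Each $p_i$ is then a number of at most $2n$ bits computed by $n$ AND gates in depth $1$, so this whole stage costs $O(n^2)$ gates and $O(1)$ depth.

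The second step is to add the $n$ numbers $p_0, \dots, p_{n-1}$, each of bit-length at most $2n$. Here I would invoke the previous proposition with $k = n$ and bit-length $2n$: it gives a circuit of depth $O(\log(2n) + \log n) = O(\log n)$ and size $O((2n)\cdot n + (2n)^2) = O(n^2)$ computing $\sum_i p_i = xy$. Composing the two stages yields total depth $O(1) + O(\log n) = O(\log n)$ and total size $O(n^2) + O(n^2) = O(n^2)$, which is exactly the claim.

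I do not expect a genuine obstacle here: the real work sits inside the iterated-sum proposition (a carry-save reduction tree of $O(\log n)$ constant-depth layers, followed by one final adder from the first proposition), which is already available. The only points to verify are bookkeeping ones: that each partial product has bit-length $\Theta(n)$ rather than something larger (it is $\leq 2n$), that forming all $n$ partial products does not secretly exceed $O(n^2)$ gates or $\omega(\log n)$ depth (it is $n$ AND-layers of width $n$, all in parallel), and that the shifts contribute nothing to depth or size. If one preferred a self-contained argument bypassing the previous proposition, the alternative would be to spell out the Wallace/Dadda tree of $3\!:\!2$ carry-save adders directly and then add the two resulting rows with a single $O(\log n)$-depth, $O(n^2)$-size adder — but this is precisely the content being reused, so routing the argument through the iterated-sum proposition is the cleanest option.
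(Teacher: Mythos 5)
Your argument is correct and is the standard one: form the $n$ partial products $y_i x 2^i$ in parallel with one layer of AND gates (wire relabelling for the shifts), then feed them to the iterated-addition proposition with $k=n$ and bit-length $2n$, giving depth $O(\log n)$ and size $O(n^2)$. The paper states this proposition without proof, merely listing it as a standard fact of the bounded fan-in circuit model, and the ordering of the three propositions (addition, iterated addition, multiplication) indicates exactly the reduction you carry out, so your proposal matches the intended route.
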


For the next two results, one needs to perform some polynomial time precomputation and hardwire the result in the circuit.

\begin{theorem}[Beame \textit{et al.}\ \cite{Beame}]\label{th:itsq}
Let $x_1, \dots, x_k$ be $k$ $n$-bit numbers. There is a circuit of depth $O(\log n + \log k)$ that computes $\prod_{i=1}^k x_i$.
\end{theorem}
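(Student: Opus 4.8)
The plan is to follow the Chinese Remainder Theorem strategy of Beame, Cook and Hoover: represent the product $X=\prod_{i=1}^k x_i$ (which has at most $N:=nk$ bits) by its residues modulo many small primes, perform the iterated multiplication independently modulo each prime, and reconstruct $X$ at the end. Fix primes $p_1,\dots,p_m$ with $m=O(N/\log N)$ chosen so that $\prod_j p_j>2^{N}$; by the prime number theorem each $p_j$ has $O(\log N)$ bits. Since $\log N=\log n+\log k$, any Boolean function of $O(\log N)$ input bits is computable by a hardwired lookup table of size $\mathrm{poly}(N)$ and depth $O(\log N)=O(\log n+\log k)$, so every operation on ``small'' ($O(\log N)$-bit) numbers — reduction modulo a $p_j$, multiplication modulo a $p_j$, a discrete-log lookup in $\mathbb Z_{p_j}^*$ — costs nothing for our depth budget. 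All the primes, fixed generators $g_j$ of the cyclic groups $\mathbb Z_{p_j}^*$, the corresponding discrete-logarithm tables, the CRT coefficients, and assorted auxiliary tables are produced by polynomial-time precomputation and hardwired in.

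The circuit then has three stages. \emph{Stage (i):} for each $i,j$ compute $x_i\bmod p_j$; precomputing $2^\ell\bmod p_j$ for each bit position $\ell<n$, this residue is a sum of at most $n$ small numbers, hence computable by a carry-save tree followed by one carry-propagate addition in depth $O(\log n)$ using the preceding propositions. \emph{Stage (ii):} for each $j$ compute $r_j:=\prod_i(x_i\bmod p_j)\bmod p_j$. If some factor is $\equiv 0$ then $r_j=0$, detected by an OR of $k$ bits in depth $O(\log k)$; otherwise $r_j=g_j^{E}\bmod p_j$ where $E=\sum_i \mathrm{dlog}_{g_j}(x_i\bmod p_j)$. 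Because each summand is below $p_j$ and $\log k\le\log N$, $E$ is itself a small number obtained as an iterated sum of $k$ small numbers in depth $O(\log k+\log\log N)$, after which $g_j^{E}\bmod p_j$ is recovered by a constant number of table lookups. Stage (ii) thus has depth $O(\log k)$. \emph{Stage (iii):} reconstruct $X$ from the $r_j$ via $X\equiv\sum_j c_j\,(P/p_j)\pmod P$ with $P=\prod_j p_j$ and $c_j:=r_j\cdot\bigl((P/p_j)^{-1}\bmod p_j\bigr)\bmod p_j$ small; each term $c_j(P/p_j)$ is a ``small $\times$ big'' product, computable as an iterated sum of $O(\log N)$ shifted copies of the hardwired $(P/p_j)$ in depth $O(\log n+\log k)$, and summing the $m$ terms is an iterated addition of $m$ numbers of $O(N)$ bits in depth $O(\log m+\log N)=O(\log n+\log k)$.

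The one genuinely delicate point is the final reduction modulo $P$ in stage (iii): done naively this is a division of $N$-bit numbers, and division is — up to log-depth reductions — no easier than the iterated product we are building, so it cannot simply be invoked. The way out is that $S:=\sum_j c_j(P/p_j)$ satisfies $0\le S<mP$, so $X=S-eP$ with $e=\lfloor S/P\rfloor\in\{0,\dots,m-1\}$ a \emph{small} number. I would compute $e$ by forming the $m-1$ small-times-big products $tP$ for $t=1,\dots,m-1$, comparing each with $S$ in parallel (depth $O(\log N)$ per comparison), and letting $e$ be the count of $t$ with $tP\le S$ — an iterated sum of $m$ bits, depth $O(\log m)$. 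One further small-times-big multiplication and a subtraction yield $X=S-eP$.

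Composing the three stages, every path through the circuit has depth $O(\log n+\log k)$, which proves the theorem; the size, though not required by the statement, stays polynomial in $n$ and $k$. The main obstacle, and the place where care is needed, is exactly the avoidance of a ``large'' division in the CRT reconstruction; everything else reduces to the already-established propositions on parallel addition, iterated addition and multiplication together with the observation that manipulations of $O(\log N)$-bit quantities are within budget.
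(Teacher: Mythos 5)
Your proposal is correct and follows exactly the Chinese-Remainder-plus-discrete-logarithm construction of Beame, Cook and Hoover that the paper itself only sketches (see Figure~\ref{fig:Beame}): residues modulo many $O(\log nk)$-bit primes, table lookups for discrete logs, iterated addition of the exponents, and CRT reconstruction with the quotient $e=\lfloor S/P\rfloor$ recovered by parallel comparisons rather than a general division. You have simply filled in the depth accounting and the delicate final reduction modulo $P$ that the paper leaves to the citation, so there is nothing to correct.
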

\begin{theorem}[Hamano \textit{et al.}\ \cite{Hamano}]\label{th:circpow}
Let $x, y, m$ be three numbers of $n$ bits. For any $\alpha > \frac 1 {\log n}$, there is a circuit of depth $O(n\frac {\alpha+1}\alpha)$ and size $O(n^{3+ 2\alpha}/(\alpha \log n))$ that computes $x^y \mmod m$.
\end{theorem}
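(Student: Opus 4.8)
The plan is to reduce $x^y \bmod m$ to two sub-tasks: (a) producing the $n$ repeated squares $w_i := x^{2^i} \bmod m$ for $i=0,1,\dots,n-1$, and (b) forming the modular iterated product $x^y \equiv \prod_{i\,:\,y_i=1} w_i \pmod m$, which is immediate from $y=\sum_i y_i 2^i$. Sub-task (b) is an iterated product of at most $n$ numbers of at most $n$ bits, so by Theorem~\ref{th:itsq} — combined either with one Barrett reduction (Figure~\ref{fig:barret}) or, to keep its size down to $O(n^3)$, with a balanced tree of modular multiplications — it costs only $O(\log^2 n)$ depth and polynomial size, which will turn out to be of lower order. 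Hence the whole difficulty, and the source of the parameter $\alpha$, lies in sub-task (a): the naive solution, a sequential chain of $n$ modular squarings each of depth $\Theta(\log n)$, gives depth $\Theta(n\log n)$, which is exactly the endpoint $\alpha\approx 1/\log n$ of the claimed bound.

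To shave the depth I would group the $n$ squarings into $n/r$ consecutive blocks of size $r := \lceil \alpha\log n\rceil$; the hypothesis $\alpha > 1/\log n$ is precisely what forces $r\ge 2$, so the blocks are non-trivial. The blocks are still executed one after another, block $j$ receiving as input the last output $w_{jr}$ of block $j-1$ (block $0$ receiving $w_0=x$). Inside a block, given the input value $w=w_{jr}$, the task is to output $w^{2},w^{4},\dots,w^{2^r} \bmod m$ — i.e. $w_{jr+1},\dots,w_{jr+r}$ — and the crucial observation is that these $r$ values can be produced \emph{in parallel} from $w$: each $w^{2^k}$ is the product of $2^k$ copies of $w$, so Theorem~\ref{th:itsq} computes it in depth $O(k+\log n)$, after which a reduction modulo $m$ using a hardwired Barrett constant (Figure~\ref{fig:barret}) adds another $O(k+\log n)$. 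Thus one block has depth $O(r+\log n)$, and over the $n/r$ blocks the total depth is $O\!\big(\tfrac{n}{r}(r+\log n)\big)=O\!\big(n+\tfrac{n\log n}{r}\big)=O\!\big(n\tfrac{\alpha+1}{\alpha}\big)$, as required.

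For the size bound I would argue block by block. The sub-circuit that produces $w^{2^k}\bmod m$ manipulates numbers of $O(2^k n)$ bits, so using the elementary quadratic-size multiplier stated above (two multiplications suffice for Barrett reduction, and the iterated product of Theorem~\ref{th:itsq} is of size polynomial in the operand length) it has size $O((2^k n)^2)=O(4^k n^2)$. Summing the geometric series over $k\le r$ gives $O(4^r n^2)=O(n^{2+2\alpha})$ per block, and with $n/r=O(n/(\alpha\log n))$ blocks the grand total is $O(n^{3+2\alpha}/(\alpha\log n))$; sub-task (b) and the inter-block routing contribute only lower-order terms. The degenerate cases $\gcd(x,m)\neq 1$ need no special treatment, since the entire construction is plain integer arithmetic modulo $m$ and never invokes any group structure, so no coprimality hypothesis on $x$ is required.

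The step I expect to be the main obstacle is making the within-block computation of the powers $w^{2^k}$ genuinely shallow while keeping the size under control: naive squaring spends $\Theta(\log n)$ of critical path per squaring, and the only way to beat this over a block of length $r$ is to invoke the $P$-uniform log-depth iterated-product circuit of Theorem~\ref{th:itsq}, at the price of forming intermediate numbers that are $\Theta(2^r)=\Theta(n^{\alpha})$ times longer than $m$. Balancing how much longer we let the intermediates grow (the parameter $r$, equivalently $\alpha$) against the resulting quadratic size blow-up of the wide multipliers is exactly what produces the stated depth/size trade-off, and getting the constants in that balance to line up — together with confirming that the size of the Beame et al.\ iterated-product circuit really is polynomial enough not to dominate $O(4^k n^2)$, and that the precomputed Barrett and CRT data are all obtainable in polynomial time — is the delicate part of the argument.
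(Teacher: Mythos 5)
Your proposal is correct and follows essentially the same route as the paper's account of Hamano \textit{et al.}'s construction: sequential blocks of $\lceil \alpha\log n\rceil$ squarings, with the powers inside each block produced in parallel by the iterated-product circuit of Theorem~\ref{th:itsq} (the Beame \textit{et al.}\ variant, modified to compute $w^{2^k}$), Barrett reduction after each, and a final binary tree of modular multiplications for $\prod_{i:y_i=1} w_i$; your depth and size accounting matches the stated bounds. The only cosmetic difference is that the paper's primary suggestion for the within-block step is Okabe \textit{et al.}'s table-based circuit, with Beame \textit{et al.}'s as the memory-efficient alternative you adopt directly.
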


Hamano \etal\cite{Hamano} construct a circuit for Theorem \ref{th:circpow} using the high-level structure of Figure \ref{fig:Hamano}.
\begin{figure}[h]
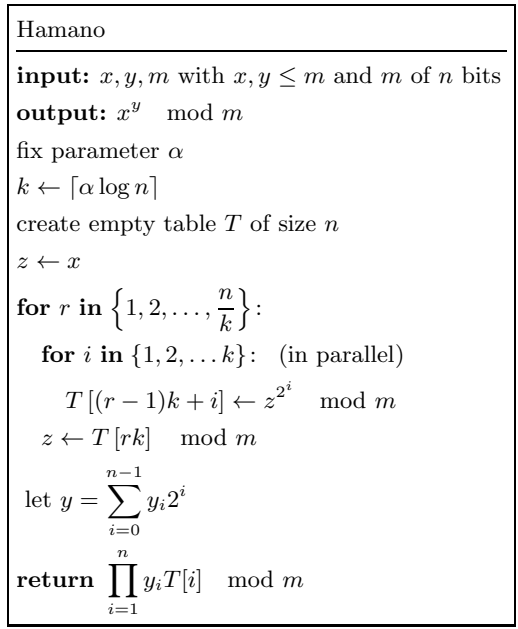

\procb{Hamano}
{
\pcinput x, y, m \text { with } x,y \leq m \text{ and } m \text{ of } n \text{ bits}\\
\pcoutput x^y \mod m\\
\text{fix parameter } \alpha\\
k \gets \ceil{\alpha \log n}\\
\text{create empty table } T \text{ of size } n\\
z \gets x\\
\pcfor r \pcin \; \set{1, 2, \dots, \frac n k}\colon\\
\t \pcfor i \pcin \; \set{1, 2, \dots k}\colon \;\;(\text{in parallel})\\
\t \t T\left[(r-1)k + i\right] \gets z^{2^i} \mod m\\
\t z \gets T\left[rk \right] \mod m\\
\text{ let } y = \sum_{i=0}^{n-1} y_i 2^i\\
\pcreturn \prod_{i=1}^n y_i T[i] \mod m
}
\caption{Hamano \textit{et al.}'s modular exponentiation circuit.}
\label{fig:Hamano}
\end{figure}
The last multiplication can be computed in parallel by arranging the multiplications in a binary tree.
Therefore, the last step can be performed by a circuit of depth $O(\log^2 n)$. 
All the modular reductions are performed using Barret's reduction, and so they can be computed by a circuit of depth $O(\log n)$.
The key step of this algorithm is the computation of $z^{2^k}\mmod m$ in depth $O(k)$.
Hamano \textit{et al.}\ suggest using the circuit constructed by Okabe \etal \cite{Okabe} (see Figure \ref{fig:Okabe}).
\begin{figure}[h]
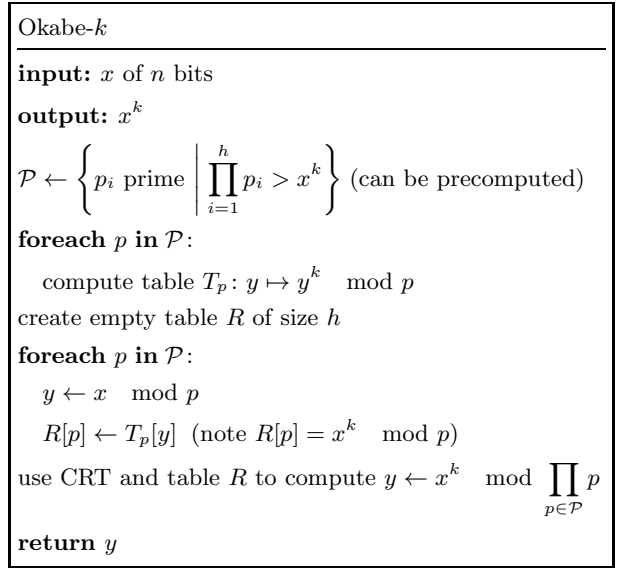

\procb{Okabe-$k$}{
\pcinput x \text{ of } n \text{ bits }\\
\pcoutput x^k\\
\mathcal P \gets \set{p_i \text{ prime } \middle| \; \prod_{i=1}^h p_i > x^k}\; (\text{can be precomputed})\\
\pcforeach p \pcin \; \mathcal P\colon\\
\t \text{compute table } T_p\colon y \mapsto y^k \mod p\\
\text{create empty table } R \text{ of size } h\\
\pcforeach p \pcin \; \mathcal P\colon\\
\t y \gets x \mod p\\
\t R[p] \gets T_p[y] \;\;(\text{note } R[p]= x^k \mod p)\\
\text{use CRT and table $R$ to compute } y \gets  x^k \mod \prod_{p \in \mathcal P} p\\
\pcreturn y
}
\caption{Okabe \textit{et al.}'s circuit for exponentiation.}
\label{fig:Okabe}
\end{figure}
The key to its performance is that the primes $p_i$ are relatively small and so computations modulo $p_i$ can be performed efficiently.
In particular, the precomputed tables can be calculated in time polynomial in $n$. 
However, within the scope of Figure \ref{fig:Hamano} one would need to precompute tables for $y\mapsto y^l \mod p$ for all $p$ and also for all $l\in \{2, 2^2, \dots, 2^k\}$. This makes the memory requirements for this approach prohibitive in our use case.

Beame \etal\cite{Beame} construct a circuit for Theorem \ref{th:itsq} in a more complex but memory efficient manner than Okabe \textit{et al.}\ (see Figure \ref{fig:Beame}).
\begin{figure}[h]
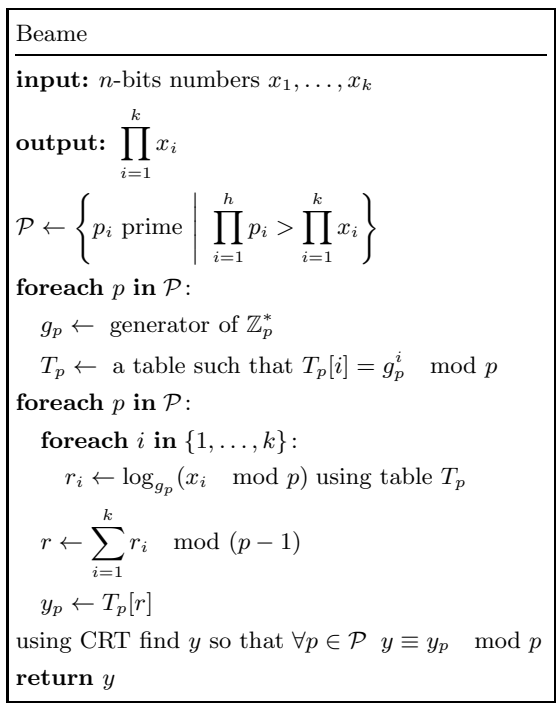

\procb{Beame}{
\pcinput n\text{-bits numbers } x_1, \dots, x_k\\
\pcoutput \prod_{i=1}^k x_i\\
\mathcal P \gets \set{p_i \text{ prime } \middle|\;\; \prod_{i=1}^h p_i > \prod_{i=1}^k x_i}\\
\pcforeach p \pcin\; \mathcal P\colon\\
\t g_p \gets \text{ generator of } \ZZ_{p}^*\\
\t T_p \gets \text{ a table such that } T_p[i] = g_p^i \mod p\\
\pcforeach p \pcin \; \mathcal P\colon\\
\t \pcforeach 	i \pcin \; \set{1, \dots, k}\colon\\
\t[2] r_i \gets \log_{g_p} (x_i \mod p) \; \text{using table }T_p\\
\t r \gets \sum_{i=1}^k r_i \mod (p-1)\\
\t y_p \gets T_p[r]\\
\text{using CRT find } y \text{ so that } \forall p \in \mathcal P\;\; y \equiv y_p \mod p\\
\pcreturn y
}
\caption{Beame \textit{et al.}'s iterated multiplication circuit.}
\label{fig:Beame}
\end{figure}
The approach by Beame \textit{et al.}\ can easily be modified to compute $x^k$ rather than $\prod_{i=1}^k x_i$.
Compared to  Figure \ref{fig:Okabe} this approach is more complex and likely slower, but it has the advantage that the precomputed tables do not change if $k$ varies. 
It follows that this approach is much more memory efficient when used in Figure \ref{fig:Hamano}.

Another source of memory consumption of the algorithms from Figures \ref{fig:Okabe} and \ref{fig:Beame} is that the reduction modulo a prime $p$ is done using lookup tables.
That is, to compute $x \mod p$ the algorithms use a table of the values $2^i \mmod p$ for $i\in \{1, \dots, n\}$ where $n$ is the size in bits of $x$.

There are  two main issues with the practical realisation of the circuit proposed by Hamano \textit{et al.}: the size of the circuit and the memory requirements for the precomputation.
For instance, picking $\alpha = 1$ and $n=70034$ leads to a circuit of size (abusing notation) $O(10^{23})$.
According to a recent news article \cite{largePC} the largest computer in 2020 has a transistor count of only $2.6 \times 10^{12}$ indicating that constructing a circuit of $10^{23}$ gates is infeasible.
By lowering the constant $\alpha$, the size of the circuit can be reduced.
Choosing $\alpha = 0.062 \approx \frac 1 {\log n} $ we obtain a circuit of size $O(10^{15})$ which is still above what we believe possible.
Moreover, using such $\alpha$ the circuit will have depth $O(n \log n)$.

Table \ref{tab:memcirc} shows the memory requirement for a few choices of $\alpha$ when $n$ is fixed to 70034.
As one can see, the memory requirements become prohibitive very quickly as $\alpha$ approaches values that would result in an effective speedup.

\begin{table}[h]
\begin{tabular}{|c  |c |c |c| c|}
\hline
$\alpha$  & N. Primes Used & Size of $\log$ Tables & Size of Reduction Tables& Circuit Size\\
\hline
0.062   & 9367 & 766 MB & 1 GB & $10^{15}$\\
0.5 & 1557697 & 49 TB & 280 GB &  $10^{18}$ \\
0.75   & 21462282 &13 PB  & 4 TB &  $10^{20}$\\
\hline
\end{tabular}
\caption{Memory requirements for $n=70034$.}
\label{tab:memcirc}
\end{table}

\subsection{Conclusion}
In this section, we analysed how to compute the modular exponentiation as efficiently as possible.
This analysis is important to estimate the relative speed between an average user and a resourceful attacker.
The speedup factor that one can achieve using top-tier commercial hardware and parallelisation seems to be bounded by a small integer.
In our discussion, we estimated a factor of roughly 20 compared to our setup when using a prime of 70034 bits.
This number is only indicative to show the non-prohibitive magnitude of possible speedups.
By using smaller primes, the chances of obtaining performance boosts via parallelisation decrease drastically. 
However, large enough primes should be used to discourage the construction of ad hoc boolean circuits  and other uses of lookup tables which, in theory, could pose a serious threat to our scheme.

\section{Post-quantum security}\label{sec:pq}
In this section, we discuss where the proposed scheme stands in a world where an adversary has access to large quantum computers.
In particular, we look at whether known quantum solutions to the  discrete logarithm problem could lead to potential attacks.

\subsection{Using a discrete logarithm oracle}
\label{sec:qatt}
In this subsection, we investigate how being able to compute the discrete logarithm may help break the delay of our scheme.
Let $p$ be a safe prime of $n$ bits. Let $c \coloneqq m^3 \mmod p$ and let $g$ be a generator of $\mathbb Z_p^*$.
The aim of an adversary is to obtain $m$ from $c$ in less than $n-1$ sequential modular multiplications.
We let such an adversary have access to an oracle that can compute the discrete logarithms in $\mathbb Z_p^*$.

Let $h = g^3$ and compute $x = \log _h (c)$. Then $g^x \equiv m \mod p$.
Assuming $x$ is effectively a random number between 1 and $p-1$, the probability that $x$ is $k$ bits long is
\[
\prob{ x \text{ is $k$ bits long}} =
\begin{cases}
\frac {2^{k-1}} {p-1} & \text{ if } k < n \\
\frac { p- 2^{n-1}} {p-1}  & \text{ if } k = n
\end{cases}
\]
Therefore, the expected bitsize of $x$ is $n-2 + \frac { 2(p - 2^{n-1}) + n -1} {p-1}$. 
So the probability that $x$ is low enough to save at least $N$ multiplications is $\frac {2^{n-1-N} -1}{p-1} \approx \frac 1 {2^N}$.
However, the base $g$ is known to the adversary before $x$, therefore they could precompute a table of values $g^{2^i} \mmod p$ for $i \in \{1, \dots, n-1\}$.
This would allow the computation of $g^x \mmod p$ to be organised in a binary tree and therefore require only $O(\log n)$ sequential multiplications.

In order to estimate the memory requirements, we fix $n=70034$. 
It follows that to store $g^{2^i} \mod p$ for all $i \in \{1, \dots, n-1\}$ one can use $n(n-1)$ bits which is roughly 585 MB.
As a result, we believe this attack is a viable option. 
Therefore, it is essential to understand whether an efficient logarithm oracle could be built in the real world as we understand it today.

\subsection{Computing the discrete logarithm}

\subsubsection{Classical algorithms for the discrete logarithm}\label{sec:classicalDlog}
The fastest classical algorithm to compute the discrete logarithm is the, so-called, number field sieve.
Despite being an exponential algorithm, Adrian \etal \cite{DHKE} show how enough precomputation can be used to compute the discrete logarithm of a DHKE group in roughly one minute.
We briefly explain the idea behind their work.
Let the DHKE protocol have parameters $p, g$ where $p$ is the prime modulo and $g$ the generator.
To compute $\log _ g b$ for some $b$, one first determines a number $l \in [1, p-1]$ such that $g^l b = q_1q_2\dots q_t$ where all $q_i$ are ``small''.
It follows that $\log _g b \mmod p$ can be deduced quickly from $\log _g q_i \mmod p$. 
Hence, the precomputation phase is spent on constructing a database of values $\log _g q_i \mmod p$. 
This precomputation can be parallelised, yet it still requires a lot of computational resources.
In the same paper, the authors estimate that to be able to compute the discrete logarithm in a 1024-bits DHKE group in almost ``real time'', the malicious entity would need to invest a dozen billion dollars and a couple of years of precomputation. 
Given that in our settings we work with much larger primes, we do not believe any classical attack would pose a threat even if the malicious party were a state entity.

\subsubsection{Eker\r{a} algorithm}
Eker\r{a} \cite{Qlog}  showed how one can improve Shor's algorithm by reducing the number of group operations needed.
The algorithm can be split in two parts: a quantum experiment that produces a set of pairs and a classical post-processing algorithm that yields the result.
The quantum experiment can be seen in Figure \ref{fig:ekera}.  The algorithm is used to compute $\log _g x \mmod p$.
\begin{figure}[h]
\procb{Quantum experiment}{
\pcinput x, g, p, s\\
\pcoutput j, k\\
\text{pick } m \text{ so that } 2^{m-1} \leq p < 2^m\\
l \gets \ceil{\frac m s}\\
a \gets \text{ superposition of } 0 \text{ to } 2^{m+l}-1\\
b \gets \text{ superposition of } 0 \text{ to } 2^{l}-1\\
c \gets g^ax^{-b} \mod p\\
\text{apply  quantum Fourier transform of size $2^{m+l}$ and $2^l$ to } (a,b,c)\\
\text{measure the result } (j,k,y)\\
\pcreturn j, k
}
\caption{Eker\r{a}'s quantum experiment.}
\label{fig:ekera}
\end{figure}
The crucial detail that allows Eker{\aa}'s algorithm to be useful in our setting is that the size of the quantum register $b$ can be bounded using the parameter $s$.
Note that the value of $g^a \mmod p$ can be computed before the receipt of the encrypted message $x$. 
Therefore, only the computation on $x$ and the Fourier transform must be performed during the ``delay period''.
Since the size of $b$ is $\frac m s$ bits, one could pick $s$ so that $x^{-b} \mmod p$ can be computed much faster than decrypting $x$.

The classical post-processing part of the algorithm can be seen in Figure \ref{fig:dlog}.
\begin{figure}[h]
\procb{Post-processing}{
\pcinput \set{(j_1, k_1), \dots, (j_n, k_n)}, m, s\\
\pcoutput d\\
l \gets \ceil{\frac m s}\\
f(x) \coloneqq (x\mmod 2^{m+l}) - 2^{m+l} \floor{ \frac {x \mmod 2^{m+l}} {2^{m+l-1}}}\\
v \gets \vector{f(-2^mk_1), \dots, f(-2^mk_n), 0}\\
L \gets \text{ lattice with basis } \set{(j_1, \dots, j_n , 1)} \cup \set{2^{m+l}e_i \middle|\;\; i \in \set{1, \dots, n}}\\
u \gets \text{ vector in $L$ closest to } v \\
\pcreturn u[n]\; (\text{last entry in the vector})
}
\caption{Eker\r{a}'s post-processing to compute the discrete logarithm.}
\label{fig:dlog}
\end{figure}
We report here a simplified version that suits our needs; the full algorithm can be seen in \cite{Qlog2}.
Computing $u$ from $v$ is an instance of the closest vector problem (CVP).
The CVP is considered infeasible in high dimensions, therefore the parameter $n$ should be chosen carefully.
The simplest algorithm to solve CVP is Babai's nearest plane algorithm {\cite{Babai}} (see Figure \ref{fig:babai}).
\begin{figure}[h]
\procb{Babai CVP}{
\pcinput \set{v_1, \dots, v_n}, t\\
\pcoutput x\\
\set{b_1, \dots, b_n} \gets \text{ LLL reduction of } \set{v_1, \dots, v_n}\\
\set{b_1^*, \dots, b_n^*} \gets \text{ Gram-Schmidt orthogonalisation of } \set{b_1, \dots, b_n}\\
x \gets t\\
\pcfor j \textbf{ from } n \textbf{ to } 1\colon\\
\t x \gets x - \left\lceil \frac{(x\cdot b^*_j)}{\norm{b^*_j}^2}\right\rfloor b_j\\
\pcreturn  t- x
}
\caption{Babai's nearest plane algorithm ($\left\lceil y\right\rfloor$ denotes the nearest integer to $y$).}
\label{fig:babai}
\end{figure}
Once the reduced basis and the Gram-Schmidt orthogonalisation are computed, the algorithm is relatively fast.
The loop requires $3n^2$ multiplications, which is reasonable as we aim to use small values for $n$.
The step that dominates the complexity of the algorithm is the LLL reduction of the basis. 
Nguyen and Sthelé \cite{LLL} propose a complex method to LLL reduce a lattice basis  in time $O(d^2n(d+\log B)M(d) \log B )$ where $d$ is the dimension of the lattice, $n$ the dimension of the underlying vector space, $B$ the largest norm of a vector in the basis and $M(d)$ indicates the time required to multiply two $d$-bits numbers. 
This algorithm suits our use well since the values of $d$ and $n$ are expected to be small.
In addition, this algorithm computes the values $\mu_{i,j} = \frac{ b_i \cdot b_j^*} {||b_j^*||^2}$ and $r_{i,i} = ||b_i^*||^2$ which can be used to speed up the computation of the Gram-Schmidt orthogonal basis.

As a result, we believe the post-processing part of Eker\r{a}'s algorithm could be computed in a small amount of time, assuming that $n$ is small.

Eker\r{a} \cite{Qlog2} analyses the relation between the parameters $s$ and $n$ and states that, when fixing a success probability of 99\%,``$n$ tends to $s+1$ as $m$ tends to infinity for a fixed $s$''.
This means that by picking $s=4$ we need roughly $5$ independent tests to compute the logarithm with a success probability of 99\%.
Since the tests are independent, they can be run in parallel on different quantum computers. 
Moreover, we expect the success probability to decrease linearly with $n$, so that using $s=4$ but $n=1$ would give roughly 25\% success probability.

Another recent paper by Gidney and Eker\r{a} {\cite{Qlogcost}}  analyses the real-world requirements for running quantum algorithms like the one described above.
In particular, they examined different quantum algorithms, among which Shor's and Eker\r{a}'s, in order to find the most efficient in terms of size and speed.
Since most of the quantum algorithms described in the literature are rather abstract like the one presented here, Gidney and Eker\r{a} also develop their own algorithm by piecing together optimisations from many sources. 
The first interesting result is that ``the cost of implementing the QFT [Quantum Fourier Transform] is negligible''{\cite{Qlogcost}}.
Moreover, Gidney and Eker\r{a} provided the Python script that was used to generate their estimate. 
Taking advantage of their code, we were able to produce some estimates for when the modulus has 65536 bits.
The quantum experiment requires $m + \lfloor\frac m s\rfloor $ modular multiplications to compute $g^a$ and $\lfloor \frac m s \rfloor$ to compute $x^{-b}$.
In Table \ref{tab:precomp} we show the estimated costs for the computation of $g^a$. 
In these calculations we minimise the size of the quantum computer, and we allow any percentage of failure as long as the expected runtime is less than a year.
The failure percentage shown in the table is the probability of the quantum computer to fail and it is not correlated with the success probability of Eker\r{a}'s algorithm.
To represent the complexity of the quantum algorithm, both its size in mega qubits and its Tiffoli count are listed. 
The expected runtime is nothing but the runtime weighted by the average number of trials needed for the computation to succeed.
We refer to \cite{Qlogcost} to deduce more parameters and note other assumptions such as physical gate error.
In Table \ref{tab:realtime1} we show the estimated costs for the computation of $x^{-b}$, where we minimise the single-run time.
Note that using the trade-off factor of $65536$ only one multiplication is needed.
However, the time required is roughly 2 minutes. 
Moreover, using such large values for $s$ is problematic as the probability of success of the post-processing is very low unless the quantum experiment is run multiple times.
In such a scenario, the post-processing algorithm becomes impractical as the parameter $n$ is no longer limited. 
As a result, in order to improve the running time, one needs to decrease the surface code cycle time and the reaction time of the machine.
Gidney and Eker\r{a} set these two parameters to the default values of  $1 \mu s$ and $10 \mu s$ respectively.
Therefore, it is unlikely that in the foreseeable future quantum computers will be efficient enough to pose a threat to our delay encryption.

\begin{table}[h]
\centering 
\makebox[\textwidth]{
\begin{tabular} {|c|c|c | c|c|c|}
\hline
Trade-Off ($s$) &  Failure & Size (Mqb) & Runtime& Expected Runtime&  Tiffoli Count ($\times 10^9$)\\
\hline 
1      &$5\%$  &$3637$    &$8085$ h    & 355 days  &$148105$ \\
2         &$17\%$     &$3282$    &$7241$ h  & 364  days&$147957$ \\
4        &$13\%$   &$1975$     &$7412$ h   &  355 days&$58196$    \\
10        &$12\%$  &$1975$     &$6864$ h  &  325  days&$63764$    \\
100         &$12\%$  &$1975$     &$7127$ h  &  337 days&$73038$ \\
1000       &$12\%$   &$1975$     &$7063$ h   &  334  days&$72387$      \\
\hline
\end{tabular}}
\caption{cost estimates for the precomputation phase; we minimise the size of the quantum computer.}
\label{tab:precomp}
\end{table}

\begin{table}[h]
\centering 
\makebox[\textwidth]{
\begin{tabular} {|c|c|c|c|c| c|c|c|c|c| c|c|c|}
\hline
Trade-Off ($s$) &  Failure & Size (Mqb) & Runtime& Expected Runtime&  Tiffoli Count ($\times 10^9$)\\
\hline 
1      &$80\%$  &$5128$    &$2425$ h    & 505 days  &$77387$ \\
2         &$42\%$     &$5128$    &$1211$ h  & 87 days &$38624$ \\
4        &$23\%$   &$5128$     &$605$ h   &  33 days &$19277$    \\
10        &$71\%$  &$4160$     &$237$ h  &  34  days &$7698$    \\
100         &$70\%$  &$3685$     &$23$ h  &  3 days&$766$ \\
1000       &$47\%$   &$1282$     &$2$ h   &  4 h  &$77$      \\
10000  & 61\% & $934$ & 14 min & 36 min & 6 \\
30000 & 31\% & 934 & 6 min & 9 min & 3 \\
65536 (max)& 59\% & 899& 2 min & 5 min & 1\\
\hline
\end{tabular}}
\caption{cost estimates to compute $x^{-b}$; we minimise the length of a single run.}
\label{tab:realtime1}
\end{table}

\section{Conclusion}
In this paper, we have proposed and analysed a delay encryption scheme based on cubing modulo a prime.
This represents the only alternative we are aware of to the original time-lock puzzle proposed by Rivest \textit{et al.} \cite{Rivest}
In particular, our scheme can be viewed as a variation of the RSA-based time-lock puzzle where we removed the hardness assumption of integer factorisation.
We performed an extensive analysis of potential attacks on our scheme and argued that they are not effective.
Among others, we have considered attacks using a quantum computer to solve the discrete logarithm problem, and we have reached the conclusion that no such attack would pose a threat in the foreseeable future.
Since our aim is to construct short delays, all of our concrete computations used a safe prime of 70034 bits. 
We believe that the same results will hold true for safe primes of the same magnitude, and these  represent good candidates for safe delays of around 1 second.
Thanks to the use of the chaining technique, one can construct delays that are secure and long enough for the use in communication protocols.
We also remark that concrete and exhaustive testing should be carried out before adopting this scheme in the wild.

\bibliographystyle{splncs04}
\bibliography{TREByCubing}

\end{document}